\def\full{1}
\def\showauthornotes{1}
\def\showtableofcontents{1}
\def\showkeys{0}
\def\showdraftbox{0}
\def\showcolorlinks{1}
\def\usemicrotype{1}
\def\showfixme{0}
\newtheorem{theorem}{Theorem}[section]
\newtheorem*{theorem*}{Theorem}
\newtheorem*{claim*}{Claim}
\newtheorem{proposition}[theorem]{Proposition}
\newtheorem*{proposition*}{Proposition}
\newtheorem{lemma}[theorem]{Lemma}
\newtheorem*{lemma*}{Lemma}
\newtheorem{corollary}[theorem]{Corollary}
\newtheorem*{conjecture*}{Conjecture}
\newtheorem{fact}[theorem]{Fact}
\newtheorem*{fact*}{Fact}
\newtheorem*{hypothesis*}{Hypothesis}
\theoremstyle{definition}
\newtheorem{definition}[theorem]{Definition}
\newtheorem{algorithm}[theorem]{Algorithm}
\newtheorem{remark}[theorem]{Remark}
\renewcommand{\mathbb}{\varmathbb}
\newcommand{\savehyperref}[2]{\texorpdfstring{\hyperref[#1]{#2}}{#2}}
\newcommand{\Sref}[1]{\hyperref[#1]{\S\ref*{#1}}}
\newcommand{\Authornote}[2]{{\sffamily\small\color{red}{[#1: #2]}}}
\newcommand{\Authorcomment}[2]{{\sffamily\small\color{gray}{[#1: #2]}}}
\newcommand{\Authorstartcomment}[1]{\sffamily\small\color{gray}[#1: }
\newcommand{\Authorfnote}[2]{\footnote{\color{red}{#1: #2}}}
\newcommand{\Authorfixme}[1]{\Authornote{#1}{\textbf{??}}}
\newcommand{\Authormarginmark}[1]{\marginpar{\textcolor{red}{\fbox{\Large #1:!}}}}
\newcommand{\Authornote}[2]{}
\newcommand{\Authorcomment}[2]{}
\newcommand{\Authorstartcomment}[1]{}
\newcommand{\Authorfnote}[2]{}
\newcommand{\Authorfixme}[1]{}
\newcommand{\Authormarginmark}[1]{}
\newenvironment{mybox}
{\center \noindent\begin{boxedminipage}{1.0\linewidth}}
{\end{boxedminipage}
\noindent
}
\newcommand{\norm}[1]{\lVert#1\rVert}
\newcommand{\iprod}[1]{\langle#1\rangle}
\newcommand{\Esymb}{\mathbb{E}}
\newcommand{\Psymb}{\mathbb{P}}
\newcommand{\Vsymb}{\mathbb{V}}
\DeclareMathOperator*{\E}{\Esymb}
\DeclareMathOperator*{\Var}{\Vsymb}
\DeclareMathOperator*{\ProbOp}{\Psymb}
\renewcommand{\Pr}{\ProbOp}
\newcommand{\textparen}[1]{\text{(#1)}}
\newcommand{\because}[1]{\textparen{because #1}}
\renewcommand{\because}[1]{\textparen{because #1}}
\newcommand{\sbits}{\{\pm1\}}
\newcommand{\defeq}{\stackrel{\mathrm{def}}=}
\renewcommand{\vec}[1]{{\bm{#1}}}
\newcommand{\mper}{\,.}
\newcommand{\mcom}{\,,}
\newcommand\bdot\bullet
\renewcommand{\th}{\textsuperscript{th}\xspace}
\DeclareMathOperator{\Inf}{Inf}
\DeclareMathOperator{\val}{val}
\DeclareMathOperator{\OPT}{OPT}
\DeclareMathOperator{\opt}{opt}
\DeclareMathOperator{\poly}{poly}
\newcommand{\ie}{i.e.,\xspace}
\newcommand{\eg}{e.g.,\xspace}
\newcommand{\etal}{et al.\xspace}
\newcommand{\R}{\mathbb R}
\newcommand{\problemmacro}[1]{\texorpdfstring{\textsc{#1}}{#1}\xspace}
\newcommand{\uniquegames}{\problemmacro{Unique Games}}
\newcommand{\maxcut}{\problemmacro{Max Cut}}
\newcommand{\balancedseparator}{\problemmacro{Balanced Separator}}
\newcommand{\maxtwosat}{\problemmacro{Max $2$-Sat}}
\newcommand{\maxthreesat}{\problemmacro{Max $3$-Sat}}
\newcommand{\smallsetexpansion}{\problemmacro{Small-Set Expansion}}
\newcommand{\minbisection}{\problemmacro{Min Bisection}}
\newcommand{\maxbisection}{\problemmacro{Max Bisection}}
\newcommand{\cF}{\mathcal F}
\renewcommand{\leq}{\leqslant}
\renewcommand{\geq}{\geqslant}
\newcommand{\draftbox}{\begin{center}
  \fbox{%
    \begin{minipage}{2in}%
      \begin{center}%
          \Large\textsc{Working Draft}\\%
        Please do not distribute%
      \end{center}%
    \end{minipage}%
  }%
\end{center}
\vspace{0.2cm}}
\newcommand{\draftbox}{}
\let\epsilon=\varepsilon
\numberwithin{equation}{section}
\newcommand{\MYstore}[2]{%
  \global\expandafter \def \csname MYMEMORY #1 \endcsname{#2}%
}
\newcommand{\MYload}[1]{%
  \csname MYMEMORY #1 \endcsname%
}
\newcommand{\MYnewlabel}[1]{%
  \newcommand\MYcurrentlabel{#1}%
  \MYoldlabel{#1}%
}
\newcommand{\MYdummylabel}[1]{}
\newcommand{\torestate}[1]{%
  \let\MYoldlabel\label%
  \let\label\MYnewlabel%
  #1%
  \MYstore{\MYcurrentlabel}{#1}%
  \let\label\MYoldlabel%
}
\newcommand{\restatetheorem}[1]{%
  \let\MYoldlabel\label
  \let\label\MYdummylabel
  \begin{theorem*}[Restatement of \prettyref{#1}]
    \MYload{#1}
  \end{theorem*}
  \let\label\MYoldlabel
}
\newcommand{\restatelemma}[1]{%
  \let\MYoldlabel\label
  \let\label\MYdummylabel
  \begin{lemma*}[Restatement of \prettyref{#1}]
    \MYload{#1}
  \end{lemma*}
  \let\label\MYoldlabel
}
\newcommand{\restateprop}[1]{%
  \let\MYoldlabel\label
  \let\label\MYdummylabel
  \begin{proposition*}[Restatement of \prettyref{#1}]
    \MYload{#1}
  \end{proposition*}
  \let\label\MYoldlabel
}
\newcommand{\restatefact}[1]{%
  \let\MYoldlabel\label
  \let\label\MYdummylabel
  \begin{fact*}[Restatement of \prettyref{#1}]
    \MYload{#1}
  \end{fact*}
  \let\label\MYoldlabel
}
\newcommand{\restate}[1]{%
  \let\MYoldlabel\label
  \let\label\MYdummylabel
  \MYload{#1}
  \let\label\MYoldlabel
}
\newcommand{\addreferencesection}{
  \phantomsection
  \addcontentsline{toc}{section}{References}
}
\newcommand{\sse}{\subseteq}
\newcommand{\eps}{\epsilon}
\let\origparagraph\paragraph
\renewcommand{\paragraph}[1]{\origparagraph{#1.}}
\newcommand{\cclassmacro}[1]{\texorpdfstring{\textbf{#1}}{#1}\xspace}
\newcommand{\NP}{\cclassmacro{NP}}
\newcommand{\dict} {\textsf{DICT}}
\newcommand{\mrv}[1]{\bm{\lowercase{#1}}}
\newcommand{\erv}[1]{\mathcal{#1}}
\newcommand{\orf}[1]{\mathcal{\uppercase{#1}}}
\newcommand{\mrf}[1]{\bm{\mathcal{\uppercase{#1}}}}
\newcommand{\opl}[1]{{\uppercase{#1}}}
\newcommand{\msf}[1]{\mathsf{#1}}
\newcommand{\mcl}[1]{\mathcal{#1}}
\newcommand{\round}{\msf{Round}}
\newcommand{\struncate}{f_{[-1,1]}}
\newcommand{\T}{\mathrm{T}}
\let\pref=\prettyref
\newcommand{\eat}[1]{}
\title{Approximating CSPs with Global Cardinality Constraints Using SDP Hierarchies}
\author{Prasad Raghavendra \and Ning Tan}
\date{}
\begin{document}

\maketitle
\draftbox
\thispagestyle{empty}


\begin{abstract}

	
	This work is concerned with approximating constraint
satisfaction problems (CSPs) with an
	additional global cardinality constraints.  For example, \maxcut is a boolean CSP where the input is a
	graph $G = (V,E)$ and the goal is to find a cut $S \cup \bar S
	= V$ that maximizes the number
	of crossing edges, $|E(S,\bar S)|$.  The \maxbisection problem is a variant of \maxcut
	with an additional global constraint that each side of the cut
	has exactly half the vertices, i.e., $|S| = |V|/2$.
	Several other natural optimization problems like \minbisection and approximating Graph Expansion can be
	formulated as CSPs with global constraints.

	In this work, we formulate a general approach towards
	approximating CSPs with global constraints using SDP
	hierarchies.  To demonstrate the approach we present the
	following results:

	\begin{itemize}
	\item Using the Lasserre hierarchy, we present an algorithm
		that runs in time $O(n^{poly(1/\epsilon)})$ that given
		an instance of \maxbisection with value $1-\epsilon$,
		finds a bisection with value $1-O(\sqrt{\epsilon})$.
		This approximation is near-optimal (up to constant
		factors in $O()$) under the Unique Games Conjecture.
    \item By a computer-assisted proof, we show that the same
algorithm also achieves a 0.85-approximation for \maxbisection,
improving on the previous bound of 0.70 (note that it is \uniquegames
hard to approximate better than a 0.878 factor). The same algorithm also yields a 0.92-approximation for \maxtwosat with cardinality constraints.
	\item   For every CSP with a global cardinality constraints, we present a generic conversion from integrality gap
		instances for the Lasserre hierarchy to a {\it dictatorship
		test} whose soundness is at most integrality gap.
		Dictatorship testing gadgets are central to hardness
		results for CSPs, and a generic conversion of the
		above nature lies at the core of the tight Unique Games
		based hardness result for CSPs. \cite{Raghavendra08}

	\end{itemize}
\end{abstract}
\clearpage

\ifnum\showtableofcontents=1
{
\tableofcontents
\thispagestyle{empty}
\clearpage
 }
\fi

\setcounter{page}{1}
\ifnum\full=0 \vspace{-8pt}\fi

\section{Introduction} \label{sec:intro}

Constraint Satisfaction Problems (CSP) are a class of fundamental
optimization problems that have been extensively studied in
approximation algorithms and hardness of approximation.  In  a
constraint satisfaction problem, the input consists of a set of
variables taking values over a fixed finite domain (say $\{0,1\}$) and
a set of {\it local} constraints on them.  The constraints are {\it
local} in that each of them depends on at most $k$ variables for some
fixed constant $k$.  The goal is to find an assignment to the
variables that satisfies the maximum number of constraints.

Over the last two decades, there has been much progress in understanding the approximability of
CSPs.  On the algorithmic front, semidefinite programming (SDP) has been
used with great success in approximating several well-known CSPs such
as \maxcut \cite{GoemansW95}, \maxtwosat \cite{CharikarMM07a} and
\maxthreesat \cite{KarloffZ97}.  More recently, these
algorithmic results have been unified and generalized to the entire class of constraint
satisfaction problems \cite{RaghavendraS09b}.  With the development of PCPs and long code based reductions, tight
hardness results matching the SDP based algorithms have been shown for
some CSPs such as Max-3-SAT \cite{Hastad01}.  In a surprising
development under the Unique Games Conjecture, semidefinite programming based algorithms have been shown
to be optimal for \maxcut \cite{KhotKMO07}, \maxtwosat
\cite{Austrin07a} and more generally every constraint satisfaction problem \cite{Raghavendra08}.

Unfortunately, neither SDP based algorithms nor the hardness
results extend satisfactorily to optimization problems with {\it
non-local} constraints.  Part of the reason is that the nice framework of SDP based approximation algorithms and matching
hardness results crucially rely on the {\it locality} of the constraints
involved.  Perhaps the simplest non-local constraint would be to
restrict the cardinality of the assignment, i.e., the number of ones
in the assignment.  Variants of CSPs with even a single cardinality constraint are not well-understood.
Optimization problems of this nature, namely constraint satisfaction problems with global cardinality constraints are the primary focus of this work.
Several important problems such as \maxbisection, \minbisection,
\smallsetexpansion can be formulated as CSPs with a single global cardinality
constraint.

As an illustrative example, let us consider the \maxbisection problem
which is also part of the focus of this work.  The \maxbisection problem is
a variant of the much well-studied \maxcut problem
\cite{GoemansW95,KhotKMO07}.  In the \maxcut
problem the goal is to partition the vertices of the input graph in to
two sets while maximizing the number of crossing edges.  The
\maxbisection problem includes an additional cardinality constraint
that both sides of the partition have exactly half the vertices of the
graph.
The seemingly mild cardinality constraint appears to change the nature
of the problem.  While \maxcut\ admits a factor $0.878$ approximation
algorithm~\cite{GoemansW95}, the best known
approximation factor for \maxbisection\ equals
$0.7027$~\cite{FeigeL06}, improving on previous bounds of
$0.6514$~\cite{FriezeJ97}, $0.699$~\cite{Ye01}, and
$0.7016$~\cite{HalperinZ02}. These algorithms proceed by rounding
the natural semidefinite programming relaxation analogous to the
Goemans-Williamson SDP for \maxcut.  Guruswami \etal
\cite{GuruswamiMRSZ11} showed that this natural SDP relaxation has a large integrality gap: the SDP
optimum could be $1$ whereas every bisection might only cut less than
$0.95$ fraction of the edges!  In particular, this implies that none
of these algorithms guarantee a solution with value close to $1$ even
if there exists a perfect bisection in the graph.  More recently,
using a combination of graph-decomposition, bruteforce enumeration and
SDP rounding, Guruswami \etal \cite{GuruswamiMRSZ11} obtained an algorithm that
outputs a $1-O(\epsilon^{1/3}\log(1/\epsilon))$ bisection on a graph that has a
bisection of value $1-\epsilon$.

A simple approximation preserving
reduction from \maxcut\ shows that \maxbisection\ is no easier to
approximate than \maxcut\ (the reduction is simply to take two
disjoint copies of the \maxcut\ instance). Therefore, the factor
$16/17$ NP-hardness \cite{Hastad01,TrevisanSSW00} and the factor $0.878$ Unique-Games
hardness for \maxcut~\cite{KhotKMO07} also applies to the
\maxbisection problem.  In fact, a stronger hardness result of factor $15/16$
was shown in ~\cite{HolmerinK04} assuming $\mathrm{NP} \not\subseteq
\bigcap_{\gamma > 0}
\mathrm{TIME}(2^{n^\gamma})$.
Yet, these hardness results for \maxbisection are far from matching the
best known approximation algorithm that only achieves a $0.702$ factor.

\paragraph{SDP Hierarchies}

Almost all known approximation algorithms for constraint satisfaction
problems are based on a fairly minimal SDP relaxation of the problem.
In fact, there exists a simple semidefinite program with linear number
of constraints (see \cite{Raghavendra08, RaghavendraS09b}) that yields the
best known approximation ratio for every CSP. This leaves open the possibility that stronger SDP relaxations such as
those obtained using the Lovasz-Schriver, Sherali-Adams and Lasserre
SDP hierarchies yield better approximations for CSPs.  Unfortunately,
there is evidence suggesting that the stronger SDP relaxations yield
no better approximation for CSPs than the simple semidefinite program
suggested in \cite{Raghavendra08,RaghavendraS09b}.  First, under the
Unique Games Conjecture, it is \NP-hard to approximate any CSP to a
factor better than that yielded by the simple semidefinite program
\cite{Raghavendra08}.  Moreover, a few recent works \cite{KhotS09,
Tulsiani09, RaghavendraS09c} have constructed
integrality gap instances for strong SDP relaxations of CSPs, obtained via Sherali-Adams and Lasserre
hierarchies.  For instance, the integrality gap instances in
\cite{KhotS09,RaghavendraS09c} demonstrate that up to $(\log\log n)^c$
rounds of the Sherali-Adams SDP hierarchy yields no better approximation
to \maxcut than the simple Goemans-Williamson semidefinite program
\cite{GoemansW95}.

The situation for CSPs with cardinality constraints promises
to be different.  For the \balancedseparator problem -- a CSP with a
global cardinality constraint, Arora \etal \cite{AroraRV04} obtained
an improved approximation of $\sqrt{\log n}$ by appealing to a
stronger SDP relaxation with triangle inequalities.  In case of
\maxbisection, one of the components of the algorithm of
\cite{GuruswamiMRSZ11} is a {\it brute-force search} -- a technique
that could quite possibly be carried out using SDP hierarchies.

Despite their promise, there are only a handful of applications of SDP hierarchies
in to approximation algorithms, most notably to approximating graph
expansion \cite{AroraRV04}, graph coloring and hypergraph independent
sets.  Moreover, there are few general techniques to round solutions to
SDP hierarchies, and analyze their integrality gap.

In an exciting development, fairly general techniques to round
solutions to SDP hierarchies (particularly the Lasserre hierarchy) has
emerged in recent works by Barak \etal \cite{BarakRS11} and Guruswami
and Sinop \cite{GuruswamiS11}.
Both these works (concurrently and independently) developed a fairly
general approach to round solutions to the Lasserre hierarchy using
an appropriate notion of local-global correlations in the SDP solution.
As an application of the technique, both the works obtain a
subexponential time algorithm for the Unique
Games problem using the Lasserre SDP hierarchy.  These works also
demonstrate several interesting applications of the technique.

Barak \etal \cite{BarakRS11} obtain an algorithm for arbitrary
$2$-CSPs with an approximation guarantee depending on the spectrum of the
input graph.  Specifically, the result implies a quasi-polynomial time approximation scheme for every $2$-CSP on
low threshold rank graphs, namely graphs with few large eigenvalues.

Guruswami and Sinop \cite{GuruswamiS11} obtain a general algorithm to
optimize quadratic integer programs with positive semidefinite forms
and global linear constraints.  Several interesting
problems including {\it $2$-CSPs with  global cardinality constraints} such
as \maxbisection, \minbisection and \balancedseparator fall in to
the framework of \cite{GuruswamiS11}.  However, the approximation
guarantee of their algorithm depends on the spectrum of the input
graph, and is therefore effective only on the special class of
low threshold rank graphs.

\ifnum\full=0 \vspace{-8pt}\fi
\subsection{Our Results}
\ifnum\full=0 \vspace{-8pt}\fi
In this paper, we develop a general approach to approximate CSPs with
global cardinality constraints using the Lasserre SDP hierarchy.

We illustrate the approach with an improved approximation algorithm
for the \maxbisection and balanced \maxtwosat problems.  For the
\maxbisection problem, we show the following result.

\begin{theorem} \label{thm:max-bisection}
For every $\delta > 0$, there exists an algorithm for \maxbisection
that runs in time $O(n^{\poly(1/\delta)})$ and obtains the following
approximation guarantees,
\begin{itemize}
	\item The output bisection has value at least $0.85 - \delta$ times the
		optimal max bisection.
	\item For every $\epsilon > 0$, given an instance $G$ with a
		bisection of value $1-\epsilon$, the algorithm
		outputs a bisection of value at least $1 - O(\sqrt{\epsilon}) -
		\delta$.
\end{itemize}
\end{theorem}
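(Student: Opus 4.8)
The plan is to run the Lasserre-hierarchy rounding machinery of Barak--Raghavendra--Steurer~\cite{BarakRS11} and Guruswami--Sinop~\cite{GuruswamiS11} on top of a Raghavendra--Steurer-style~\cite{RaghavendraS09b} independent-rounding analysis, handling the cardinality constraint by encoding it into the hierarchy in a form that survives conditioning. First I would write down the level-$r$ Lasserre relaxation of \maxbisection\ for $r=\poly(1/\delta)$: a feasible solution is a family of consistent local distributions $\{\mu_S\}_{|S|\le r}$ over $\{0,1\}$-assignments to vertex sets, together with the \emph{strong} form of the global constraint, namely $\sum_i \Pr_{\mu_T}[\,x_i=1\mid x_T=\tau\,]=n/2$ for every local distribution $\mu_T$ and every $\tau$. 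Such a solution is computed in time $n^{O(r)}=n^{\poly(1/\delta)}$; call its value $\sdp$.

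Next is the propagation (conditioning) step. Following~\cite{BarakRS11,GuruswamiS11}, conditioning on the sampled values of a random prefix $W$ of $O(1/\eta)$ coordinates drives the global correlation $\frac{1}{n^2}\sum_{i,j} I(x_i;x_j)$ below $\eta$ in expectation, with $\eta=\poly(\delta)$, while the conditioned object is still a valid (lower-degree) Lasserre solution; since the value is a martingale under conditioning, a random $W$ keeps the value $\ge \sdp-\delta$ and the global correlation $\le\eta$ simultaneously with constant probability. Crucially, because the cardinality constraint was imposed in its strong form, \emph{every} conditional solution still satisfies $\sum_i \mu_i'=n/2$ exactly, where $\mu_i'=\Pr[x_i=1\mid x_W]$. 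We then apply a Goemans--Williamson-type rounding~\cite{GoemansW95} to the conditioned SDP vectors with vertex-dependent thresholds chosen so that $\Pr[\hat x_i=1]=\mu_i'$ for every $i$, which makes the rounding marginal-preserving.

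Two things then need checking. For the objective, small global correlation (with Pinsker converting mutual information to covariances) means that for all but an $O(\eta)$-fraction of edges the two endpoints behave as an almost-product distribution under the rounding, so the expected fraction of cut edges is within $O(\eta)$ of the average over edges of a two-variable rounding curve evaluated at the conditional marginals and inner products; bounding this curve against the per-edge SDP contribution yields a $0.85-O(\delta)$ fraction in general, and on an instance of optimum $1-\epsilon$ (so $\sdp\ge 1-\epsilon$) the classical Goemans--Williamson completeness estimate applied edge-by-edge yields a $1-O(\sqrt\epsilon)-O(\delta)$ fraction. For the cardinality, $\E[\,|\{i:\hat x_i=1\}|\,]=\sum_i\mu_i'=n/2$, and $\Var(|\{i:\hat x_i=1\}|)=\sum_{i,j}\mathrm{Cov}(\hat x_i,\hat x_j)$, which the global-correlation bound forces to be at most a polynomial in $\eta$ times $n^2$, so $|\{i:\hat x_i=1\}|=n/2\pm o(n)$ with constant probability. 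Finally, restore an exact bisection by moving that many of the lowest-degree vertices from the heavier side to the lighter side: a uniformly random bisection cuts more than $|E|/2$ edges so $\OPT\ge|E|/2$, and at least $3n/4$ of the vertices have degree at most $8|E|/n$, so the fix-up changes the cut by at most a polynomial in $\eta$ times $|E|$, which is $\le\delta\cdot\OPT$ once $\eta$ is a small enough polynomial in $\delta$ (in the $1-\epsilon$ regime this is an additive loss absorbed into $\delta$, and instances with $n$ below a $\delta$-dependent constant are solved by brute force). The algorithm enumerates all $n^{O(1/\eta)}$ choices of $(W,x_W)$, evaluates the expected repaired value of each (derandomizing the Gaussian step in the standard way), and outputs the best, for a total running time $n^{\poly(1/\delta)}$.

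The step I expect to be the main obstacle is forcing the cardinality constraint and the rounding to coexist: a single parameter regime must make the conditioning preserve the SDP value, push the global correlation below $\poly(\delta)$, and --- via the strong encoding --- keep the conditional marginals exactly balanced, while the $o(n)$ repair does not erode the objective; wiring all of these together is the technical heart of the argument. A secondary difficulty is purely quantitative: designing the two-variable rounding curve and verifying the constants $0.85$ for \maxbisection\ and $0.92$ for balanced \maxtwosat\ is an optimization over rounding schemes that I expect to resolve only with computer assistance, whereas the $1-O(\sqrt\epsilon)$ guarantee should fall out of a clean edge-by-edge comparison with the Goemans--Williamson analysis.
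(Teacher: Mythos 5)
Your proposal is correct and follows essentially the same route as the paper: condition a $\poly(1/\delta)$-round Lasserre solution to drive the average mutual information (equivalently, covariance) between random pairs of vertices below $\poly(\delta)$, round with a bias-preserving per-vertex-threshold hyperplane scheme so the expected balance is exact, bound the variance of the balance by the global correlation, and repair to an exact bisection by moving low-degree vertices, with the $0.85$ constant obtained by computer assistance and the $1-O(\sqrt{\epsilon})$ bound by a per-edge comparison with Goemans--Williamson. The only cosmetic difference is that you invoke the low-global-correlation bound in the objective analysis as well, whereas the paper's cut-value analysis is purely local per edge and uses the correlation bound only for concentrating the balance.
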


Note that the approximation guarantee of $1- O(\sqrt{\epsilon})$ on
instances with $1-\epsilon$ is nearly optimal (up to constant factors
in the $O()$)
under the Unique Games Conjecture.  This follows from the
corresponding hardness of \maxcut and the reduction from \maxcut to
\maxbisection.

Our approach is robust in that it also yields similar
approximation guarantees to the more general $\alpha$-\maxcut problem
where the goal is to find a cut with exactly $\alpha$-fraction of
vertices on one side of the cut.  More generally, the algorithm also
generalizes to a weighted version of \maxbisection, where the vertices
have weights and the cut has approximately half the weight on each
side.  \footnote{Note that in the weighted case, finding any exact bisection is at least as
hard as subset-sum problem.}

The same algorithm also yields an approximation to the complementary
problem of \minbisection.  Formally, we obtain the following
approximation algorithm for \minbisection and
$\alpha$-\balancedseparator.

\begin{theorem} \label{thm:minbisection}
For every $\delta > 0$, there exists an algorithm running in time
$O(n^{O(\poly(1/\delta))})$, which given a graph with a bisection ($\alpha$-balanced separator) cutting
$\epsilon$-fraction of the edges, finds a bisection ($\alpha$-balanced separator) cutting at most
$O(\sqrt{\epsilon})+\delta$-fraction of edges.
\end{theorem}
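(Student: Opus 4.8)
The plan is to realize \minbisection and the $\alpha$-\balancedseparator problem as maximization CSPs with a single global cardinality constraint, and then invoke the very algorithm used to prove \pref{thm:max-bisection}. Given a graph $G=(V,E)$, introduce a boolean variable $x_i$ for each vertex $i$, impose the equality constraint ``$x_i = x_j$'' for each edge $(i,j)\in E$, and add the global constraint $\sum_i x_i = |V|/2$ (respectively $\sum_i x_i = \alpha|V|$). A cut cuts an $\epsilon$-fraction of the edges precisely when it leaves a $(1-\epsilon)$-fraction of the edges monochromatic, so the hypothesis of \pref{thm:minbisection} is exactly that this CSP has optimum at least $1-\epsilon$, and outputting a bisection ($\alpha$-balanced separator) that cuts at most $O(\sqrt\epsilon)+\delta$ of the edges is the same as outputting a feasible assignment of CSP-value at least $1-O(\sqrt\epsilon)-\delta$.

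The key point is that the equality predicate is just the complement of the \maxcut predicate: it is a $2$-CSP over $\bits$, so the Lasserre relaxation, the conditioning step, and the Goemans--Williamson-style rounding behind \pref{thm:max-bisection} all apply without change. The analysis that converts an SDP value $1-\epsilon$ into a rounded value $1-O(\sqrt\epsilon)$ for \maxcut depends only on the geometry of the vectors $\set{v_i}$ and is oblivious to whether a crossing or a non-crossing edge is rewarded; hence running the algorithm of \pref{thm:max-bisection} with the equality predicate in place of disequality and the cardinality target $|V|/2$ (resp. $\alpha|V|$) takes time $O(n^{\poly(1/\delta)})$ and, on an instance of optimum $1-\epsilon$, returns a feasible assignment of value $\ge 1-O(\sqrt\epsilon)-\delta$, \ie a bisection ($\alpha$-balanced separator) cutting at most $O(\sqrt\epsilon)+\delta$ of the edges.

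The only point requiring attention is meeting the cardinality constraint exactly. As in \pref{thm:max-bisection}, the rounded assignment has $\sum_i x_i$ equal to the target up to a lower-order additive term; this is where the local--global correlation argument of \cite{BarakRS11,GuruswamiS11} enters, since after conditioning the Lasserre solution on a few variables the remaining variables round almost independently and their sum concentrates. One repairs the small discrepancy by flipping $o(n)$ vertices, which perturbs the objective by a lower-order amount absorbed into the $O(\sqrt\epsilon)+\delta$ slack, yielding an exactly balanced cut in the unweighted case (in the weighted case, as noted for \maxbisection, exact balance is subset-sum hard and one settles for approximate balance).

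Thus the substance of \pref{thm:minbisection} is inherited from the general machinery the paper builds for CSPs with global cardinality constraints; the translation above is essentially bookkeeping. Accordingly, the main obstacle is the same one faced in \pref{thm:max-bisection}: proving that $\poly(1/\delta)$ rounds of the Lasserre hierarchy together with a conditioning-based rounding simultaneously preserve the fraction of satisfied $2$-CSP constraints up to an $O(\sqrt\epsilon)$ loss and keep the cardinality concentrated enough that the global constraint can be enforced at negligible cost. Once that is established, \pref{thm:minbisection} follows by instantiating it with the equality predicate, exactly as \pref{thm:max-bisection} is obtained by instantiating it with disequality.
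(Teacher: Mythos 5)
Your proposal is correct and matches the paper's proof in all essentials: a $\poly(1/\delta)$-round Lasserre relaxation conditioned down to a globally uncorrelated solution (\pref{thm:alphaindependentsol}), the bias-preserving hyperplane rounding, concentration of the balance via \pref{cor:balance}, and a final repair step that moves an $O(\delta)$ fraction of vertices. The only cosmetic difference is that the paper states its edge-separation bound (\pref{lem:rooteps}) directly for nearly-coincident vectors and then derives the \maxbisection guarantee from it using the antipodal symmetry $\Pr(F(v)\neq F(-v))=1$ of the rounding, whereas you invoke the \maxbisection guarantee and transfer it to the equality predicate; the same symmetry justifies both directions, so your claim that the analysis is "oblivious" to which predicate is rewarded is accurate once that one-line observation is made.
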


Towards showing a matching hardness results for CSPs with cardinality
constraints, we construct a {\it dictatorship test} for these
problems.  Dictatorship testing gadgets lie at the heart of all
optimal hardness of approximation results for CSPs (both
$\NP$-hardness and unique games based hardness results).  In fact,
using techniques from the work of Khot \etal \cite{KhotKMO07}, any
dictatorship test for a CSP yields a corresponding unique games based
hardness result.  More generally, a large fraction of hardness of
approximation results (not necessarily CSPs) have an underlying
dictatorship testing gadget.

Building on earlier works, Raghavendra \cite{Raghavendra08} exhibited
a generic reduction that starts with an arbitrary integrality gap
instance for certain SDP relaxation of a CSP to a dictatorship test for the same CSP.
In turn, this implied optimal hardness results matching the
integrality gap of the SDP under the unique games conjecture.
Using techniques from \cite{Raghavendra08}, we exhibit a generic
reduction from integrality gap instances to the Lasserre SDP
relaxation of a CSP with cardinality constraints, to a dictatorship
test for the same.  While the reduction applies in general for every
CSP with cardinality constraints, for the sake of exposition, we
present the special case of \maxbisection.  For \maxbisection, we show
the following.
\begin{theorem}(Informal Statement)
For every $\epsilon,\delta > 0$, given an integrality gap instance for
$\poly(1/\epsilon)$-round Lasserre SDP for \maxbisection, with SDP
value $c$ and optimum integral value $s$, there exists a dictatorship
test for \maxbisection with completeness $c - O(\epsilon+\delta)$ and
soundness $s + O(\epsilon+\delta)$.
\end{theorem}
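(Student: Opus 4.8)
The plan is to follow the integrality-gap-to-dictatorship-test paradigm of \cite{Raghavendra08}, the one new ingredient being that the single global cardinality constraint of \maxbisection must be carried through both the completeness and the soundness analyses. Here a dictatorship test for \maxbisection is a distribution over pairs of queries into a single long-code table $f:\Omega^R\to\{0,1\}$ together with the balance requirement $\lvert\ex{f}-\tfrac12\rvert\le\delta$; its \emph{completeness} is the acceptance probability of the balanced dictators, and its \emph{soundness} is the supremum of the acceptance probability over balanced $f$ with no influential coordinate. Fix the integrality gap instance: a weighted graph $G=(V,E)$ on $n$ vertices together with a feasible solution to the $t$-round Lasserre relaxation of \maxbisection, $t=\poly(1/\epsilon)$. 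This solution supplies, for every $S\subseteq V$ with $\lvert S\rvert\le t$, a local distribution $\mu_S$ over $\{0,1\}^S$, and vectors $v_{(S,\alpha)}$ (with first-moment vector $v_0$) whose inner products reproduce the marginals of the $\mu_S$ and that satisfy the cardinality constraint $\sum_{i\in V}\langle v_0,v_{(\{i\},1)}\rangle=n/2$. By hypothesis the SDP objective is $c$, while every bisection of $G$ cuts at most an $s$ fraction of the edge weight.

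\textbf{The test.} The long-code table is $f:\Omega^R\to\{0,1\}$, where $\Omega$ is a constant-size probability space read off from the Lasserre solution (its ``integral local assignments'', augmented with an auxiliary noise symbol of rate $\gamma=O(\delta)$, exactly as in the unique games reductions) and $R\to\infty$. One run of the test: sample a random edge $e=(i,j)\in E$; using the level-$t$ local distributions on a bounded neighbourhood of $e$, generate a correlated pair $\mathbf x,\mathbf y\in\Omega^R$ as $R$ independent copies of a single correlated pair whose second moments are prescribed by $\langle v_{(\{i\},a)},v_{(\{j\},b)}\rangle$ and then $\gamma$-smoothed; query $f(\mathbf x),f(\mathbf y)$ and accept iff they differ, subject to $\lvert\ex{f}-\tfrac12\rvert\le\delta$. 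A routine check (using that $t$ exceeds a small constant and that independent noise has been added) shows the prescribed correlated distribution is an actual probability distribution, and the ``folding'' that makes $f$ serve simultaneously as the long code for every vertex of $V$ is set up to be compatible with the balance requirement.

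\textbf{Completeness and soundness.} For each coordinate $k\in[R]$ the ``$k$-th dictator'' $f^{(k)}$ reads the local integral assignment of the $k$-th coordinate's sampled vertex; then $(f^{(k)}(\mathbf x),f^{(k)}(\mathbf y))$ is distributed as $\mu_{\{i,j\}}$ up to $O(\gamma)$, so averaging over $e$ gives acceptance $\ge c-O(\epsilon+\delta)$, and $\ex{f^{(k)}}=\tfrac1n\sum_i\langle v_0,v_{(\{i\},1)}\rangle=\tfrac12$, so $f^{(k)}$ is balanced — this is exactly where the cardinality constraint of the gap instance enters the completeness bound. For soundness, let $f$ be balanced with no $\tau(\epsilon)$-influential coordinate (choosing $t\gg 1/\tau$ so the $\gamma$-noise truncates the high-degree part of $f$). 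By the invariance principle in the form used in \cite{Raghavendra08}, the acceptance probability of $f$ is within $O(\epsilon+\delta)$ of the value of a Gaussian rounding of the Lasserre vectors of $G$; moreover the low-influence hypothesis lets us decode $f$ into a distribution over assignments to $V$ whose expected cut value matches the test value up to $O(\epsilon+\delta)$ and whose expected number of ones is $\ex{f}\cdot n=n/2$. Rounding to a single assignment and repairing its $o(n)$ imbalance yields a genuine bisection of $G$, of value $\le s$, so the acceptance probability is $\le s+O(\epsilon+\delta)$.

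\textbf{Main obstacle.} The delicate step — the only place the argument genuinely departs from the CSP reduction of \cite{Raghavendra08} — is carrying the global constraint through soundness: one must show that ``no influential coordinate'' together with $\ex{f}=\tfrac12$ forces the decoded assignment distribution to be nearly balanced, so that it is a \emph{feasible} bisection of $G$ and hence bounded by $s$, and that the accumulated errors (the invariance error, the smoothing rate $\gamma$, the loss in collapsing the distribution to one assignment, and the repair of an $o(n)$ imbalance) all stay $O(\epsilon+\delta)$. A secondary technical point is verifying that a Lasserre solution consistent with the cardinality constraint at level $t$ does produce valid correlated query distributions after $\gamma$-smoothing. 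Everything else — the influence-decoding, the invariance estimate, and the choice of parameters $R,\tau,\gamma,t$ — is then the standard machinery.
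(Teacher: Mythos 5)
Your overall architecture matches the paper's: the gadget is the Raghavendra-style \maxcut dictatorship test built from the pairwise local distributions $\mu_{\{i,j\}}$ of the Lasserre solution, completeness of the balance condition for dictators follows from the SDP's cardinality constraint ($\E_{i}\E_{a\sim\mu_i}[a]=0$), and soundness goes through the invariance principle by using a low-influence $\cF$ to round the vectors back into a cut of $G$. However, there is a genuine gap at exactly the step you flag as the ``main obstacle,'' and the mechanism you gesture at for closing it is not the right one. You claim that low influence together with $\E[f]=\tfrac12$ yields a decoded assignment whose imbalance is $o(n)$ and can be ``repaired.'' But the invariance principle only controls the \emph{expectation} of the balance of the rounded assignment; for an arbitrary feasible Lasserre solution the balance can have constant variance (the covariance of the rounded bits $p_i^*,p_j^*$ is governed by $|\iprod{w_i,w_j}|$, which need not be small on average), so a single sampled assignment may be $\Omega(n)$-imbalanced with constant probability, and then its cut value is not bounded by $s$. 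Neither the low-influence hypothesis on $\cF$ nor the balance of $\cF$ fixes this.

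The missing idea is the paper's preprocessing step: one first applies the conditioning procedure of \pref{sec:uncorrelatedsol} to the $\poly(1/\epsilon)$-round solution to extract a $2$-round solution that is $\delta$-\emph{independent} (globally uncorrelated), and the gadget is built from \emph{that} solution. This is the actual reason the hypothesis involves $\poly(1/\epsilon)$ rounds --- not, as you suggest, to truncate high-degree parts of $f$ (the gadget itself only ever uses pairwise marginals). With $\delta$-independence in hand, \pref{fact:statdist} gives $\E_{i,j}|\iprod{w_i,w_j}|\le\poly(\delta)$, and the paper's \pref{lem:rounding-cont} and \pref{lem:covariance} convert this into a $C(R)\poly(\delta)$ bound on the variance of the balance of the rounded cut, which is what lets one repair an $o(n)$ imbalance and conclude soundness $\le s+O(\epsilon+\delta)$. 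Without inserting this conditioning/uncorrelation step, your soundness argument does not go through.
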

\ifnum\full=0
The formal statement of the result and its proof is deferred to the
full version of the paper.
\fi
\ifnum\full=1
The formal statement of the result  and its proof is presented in
\pref{sec:gaptodict}.
\fi
Unfortunately, this dictatorship test does not yet translate in to a
corresponding hardness result for \maxbisection.
First, observe that the framework of Khot \etal \cite{KhotKMO07} to
show unique games based hardness results does not apply to
\maxbisection due to the global constraint on the instance.
This is the same reason why the unique games conjecture is not known to imply
hardness results for \balancedseparator.   The reason being that the hard instances of these problems are
required to have certain global structure (such as expansion in case of
\balancedseparator).  In case of \maxbisection, a hard instance must
not decompose in to sets of small size ($\epsilon n$ vertices), else
the global balance condition can be easily satisfied by appropriately
flipping the cut in each set independently.  Gadget reductions from a unique games instance
preserve the global properties of the unique games instance such as lack of expansion. Therefore, showing
hardness for \balancedseparator or \maxbisection problems require a
stronger assumption such as unique games with expansion or the Small
Set expansion hypothesis \cite{RaghavendraS10}.

\ifnum\full=0 \vspace{-8pt}\fi

\section{Overview of Techniques} \label{sec:overview}

In this section, we outline the our approach to approixmating the
\maxbisection problem.  The techniques are fairly general and can be applied to other CSPs with global cardinality constraints.

\paragraph{Global Correlation}
For the sake of exposition, let us recall the Goemans and Williamson algorithm
for \maxcut.
Given a graph $G = (V,E)$, the Goemans-Williamson SDP relaxation for
\maxcut assigns a unit vector $v_i$ for every vertex $i \in V$, so as to maximize the average squared length $E_{i,j \in E} \norm{v_i - v_j}^2$ of the edges.  Formally, the SDP relaxation is given by,
$$ \textrm{maximize }    \E_{i,j \in E} \norm{v_i - v_j}^2  \ \textrm{ subject to } \norm{v}_i^2 = 1 ~\forall i \in V$$
The rounding scheme picks a random halfspace passing through the
origin and outputs the partition of the vertices induced by the
halfspace.  The value of the cut returned is guaranteed to be within a
$0.878$-factor of the SDP value.

The same algorithm would be an approximation for \maxbisection if the
cut returned by the algorithm was near-balanced, i.e., $|S| \approx
|V|/2$.  Indeed, the expected number of vertices on either side of the
partition is $|V|/2$, since each vertex $i \in V$ falls on a given
side of a random halfspace with probability $\frac{1}{2}$.

If the balance of the partition returned is concentrated around its
expectation then the Goemans and Williamson algorithm would yield a
$0.878$-approximation for \maxbisection.   However, the balance of the
partition need not be concentrated, simply because the values taken by
vertices could be highly correlated with each other!

\paragraph{SDP Relaxation}
To exploit the correlations between the vertices we use a $k$-round Lasserre SDP
\cite{Lasserre01} of \maxbisection for a sufficiently large constant $k$.  On a high
level, the solutions to a Lasserre's SDP hierarchy are vectors
that {\it locally behave} like a distribution over integral solutions.
The $k$-round Lasserre SDP has the following properties similar to
a true distribution over integral solutions.
\begin{itemize}
\item {\it Marginal Distributions} For any subset $S$ of vertices with $|S|\leq k$, the SDP will
yield a distribution $\mu_S$ on partial assignments to the vertices
($\{-1,1\}^S$).  The marginals of $\mu_S$, $\mu_T$ for a pair of
subsets $S$ and $T$ are consistent on their intersection $S \cap T$.
\item {\it Conditioning} Analogous to a true distribution over
integral solutions, for any subset $S \subseteq V$ with $|S| \leq k$
and a partial assignment $\alpha \in \{-1,1\}^S$, the SDP solution can
be conditioned on the event that $S$ is assigned $\alpha$.
\end{itemize}
A detailed description of the Lasserre's SDP hierarchy for \maxbisection and other CSPs will be given in \pref{sec:prelim}.

\paragraph{Measuring Correlations}

  In this work, we will use mutual
information as a measure of correlation between two random variables.
We refer the reader to \pref{sec:prelim} for the definitions of
Shannon entropy and mutual information.  The correlation between vertices $i$ and $j$ is given by
$$I_{\mu_{i,j}}(X_i; X_j) = H(X_i) - H(X_i|X_j) \mcom$$
where the random variables $X_i,X_j$ are sampled using the local distribution
$\mu_{i,j}$ associated with the Lasserre SDP.
An SDP solution will be termed {\it $\alpha$-independent} if the average
mutual information between random pairs of vertices is at most
$\alpha$, i.e., $\E_{i,j \in V}[I(X_i;X_j)] \leq \alpha$.

For most natural rounding schemes such as the halfspace-rounding, the variance
of the balance of the cut returned is
directly related to the average correlation between random pairs
of vertices in the graph.  In other words, if the rounding scheme is
applied to an $\alpha$-independent SDP solution then the variance of
the balance of the cut is at most $poly(\alpha)$.

\paragraph{Obtaining Uncorrelated SDP Solutions}


Intuitively, if it is the case that globally all the vertices are
highly correlated, then conditioning on the value of a vertex should
reveal information about the remaining vertices, therefore reducing
the total entropy of all the vertices.

Formally, let us suppose the $k$-round Lasserre SDP solution is not
$\alpha$-independent, i.e.,  $\E_{i,j \in V}[I(X_i;X_j)] > \alpha$.
Let us pick a vertex $i \in V$ at random, sample its value $b \in
\{-1,1\}$ and condition the SDP
solution to the event $X_i = b$.  This conditioning reduces the
average entropy of the vertices ($\E_{j \in V}[H(X_j)]$) by at least
$\alpha$ in expectation.  If the conditioned SDP solution is
$\alpha$-independent we are done, else we repeat the process.

The intital average entropy $\E_{j \in V}[H(X_j)]$ is at most $1$, and
the quantity always remains non-negative.  Therefore, within
$\frac{1}{\alpha}$ conditionings, the SDP solution will be
$\alpha$-independent.  Starting with a $k$-round Lasserre SDP
solution, this process produces a $k-t$ round $\alpha$-independent Lasserre
SDP solution for some $t > \frac{1}{\alpha}$.

\paragraph{Rounding Uncorrelated SDP Solutions}
Given an $\alpha$-independent SDP solution, for many natural rounding
schemes the balance of the output cut is concentrated around its
expectation.  Hence it suffices to construct rounding schemes that
output a balanced cut in expectation.  We exhibit a simple rounding
scheme that preserves the bias of each vertex individually, thereby
preserving the global balance property.   
The details of the rounding algorithm will be described in \pref{sec:rounding}.

\ifnum\full=0 \vspace{-8pt}\fi
\section{Preliminaries} \label{sec:prelim}

\ifnum\full=1
\paragraph{Constraint Satisfaction Problem with Global Cardinality Constraints}

In this section we formally define CSPs with global constraints.

\begin{definition} [Constraint Satisfaction Problems with Global Cardinality Constraints]
A constraint satisfaction problem with global cardinality constraints is specified by $\Lambda=([q],\mathbb{P},k,c)$ where $[q]=\{0,\ldots,q-1\}$ is a finite domain, $\mathbb{P}=\{P:[q]^{t} \mapsto [0,1]|t\leq k$\} is a set of payoff functions. The maximum number of inputs to a payoff function is denoted by $k$. The map $c: [q]\mapsto [0,1]$ is the cardinality function which satisfies $\sum_{i} c_i=1$. For any $0\leq i\leq q-1$, the solution should contain $c_i$ fraction of the variables with value $i$.

\end{definition}
\begin{remark}
Although some problems (\eg \balancedseparator) do not fix the cardinalities to be some specific quantities, they can be easily reduced to the above case.
\end{remark}

\begin{definition}
An instance $\Phi$ of constraint satisfaction problems with global
cardinality constraints $\Lambda=([q],\mathbb{P},k,c)$ is given by $\Phi=(\mathcal{V},\mathbb{P}_{\mathcal{V}},W)$ where
\begin{itemize}
\item $\mathcal{V}=\{x_1,\ldots,x_n\}$: variables taking values over $[q]$
\item $\mathbb{P}_{\mathcal{V}}$ consists of the payoffs applied to subsets $S$ of size at most $k$
\item Nonnegative weights $W=\{w_S\}$ satisfying $\sum_{|S|\leq k}w_S=1$. Thus we may interpret $W$ as a probability distribution on the subsets. By $S\sim W$, we denote a set $S$ chosen according to the probability distribution $W$
\item An assignment should satisfy that the number of variables with value $i$ is $c_i n$ (we may assume this is an integer).
\end{itemize}
\end{definition}
\fi
Here we give a few examples of CSPs with global cardinality
constraints.
\ifnum\full=0
We defer the formal definition of a CSP with global
cardinality constraint to the full version.
\fi
\begin{definition} [Max(Min) Bisection]
Given a (weighted) graph $G=(V,E)$ with $|V|$ even, the goal is to partition the vertices into two equal pieces such that the number (total weights) of edges that cross the cut is maximized (minimized).
\end{definition}
More generally, in an $\alpha$-\maxcut problem, the goal is to find a
partition having $\alpha n$ vertices on one side, while cutting the
maximum number of edges.  Furthermore, one could allow weights on the
vertices of the graph, and look for cuts with exactly
$\alpha$-fraction of the weight on one side.  Most of our techniques
generalize to this setting.

Throughout this work, we will have a weighted graph $G$ with weights
$W$ on the vertices.  The weights on the vertices are assumed to form
a probability distribution.  Hence the notation $i \sim W$ refers to
a random vertex sampled from the distribution $W$.

\begin{definition} [Edge Expansion]
Given a graph (w.l.o.g, we may assume it is a unweighted regular graph) $G=(V,E)$, and $\delta\in(0,1/2)$, the goal is to find a set $S\subseteq V$ such that $|S|=\delta |V|$ and the edge expansion of $S$: $\Phi(S)=\frac{E(S,\bar{S})}{d|S|}$ is minimized.
\end{definition}

\paragraph{Information Theoretic Notions}
\begin{definition} \label{def:entropy}
Let $X$ be a random variable taking values over $[q]$. The \emph{entropy} of $X$ is defined as
$$
H(X)\defeq -\sum_{i\in[q]}\Pr(X=i)\log\Pr(X=i)
$$
\end{definition}
\begin{definition}\label{def:mutualinfo}
Let $X$ and $Y$ be two jointly distributed variables taking values over $[q]$. The \emph{mutual information} of $X$ and $Y$ is defined as
$$
I(X;Y)\defeq\sum_{i,j\in [q]}\Pr(X=i,Y=j)\log \frac{\Pr(X=i,Y=j)}{\Pr(X=i)\Pr(Y=j)}
$$
\end{definition}
\begin{definition}\label{def:conditionalentropy}
Let $X$ and $Y$ be two jointly distributed variables taking values over $[q]$. The \emph{conditional entropy} of $X$ conditioned on $Y$ is defined as
$$
H(X|Y)=\E_{i\in [q]}[H(X|Y=i)]
$$
\end{definition}
\ifnum\full=1
We also give two well-known theorems in information theory below.
\begin{theorem} \label{thm:entropy-info}
Let $X$ and $Y$ be two jointly distributed variables taking value on $[q]$, then
$$
I(X;Y)=H(X)-H(X|Y)
$$
\end{theorem}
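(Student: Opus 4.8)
The plan is to establish the identity by unwinding the definitions of $I(X;Y)$, $H(X)$ and $H(X|Y)$ from \pref{def:mutualinfo}, \pref{def:entropy} and \pref{def:conditionalentropy}, and then regrouping the resulting double sums. First I would use the chain rule for probabilities, $\Pr(X=i,Y=j) = \Pr(Y=j)\,\Pr(X=i|Y=j)$, to rewrite the argument of the logarithm appearing in the definition of mutual information as $\Pr(X=i,Y=j)/(\Pr(X=i)\Pr(Y=j)) = \Pr(X=i|Y=j)/\Pr(X=i)$, so that the logarithm splits additively as $\log\Pr(X=i|Y=j) - \log\Pr(X=i)$. Substituting this back, $I(X;Y)$ becomes the difference of two double sums over $i,j\in[q]$, each summand weighted by $\Pr(X=i,Y=j)$.

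For the piece carrying $-\log\Pr(X=i)$, I would carry out the sum over $j$ first and invoke the marginalization identity $\sum_{j\in[q]}\Pr(X=i,Y=j) = \Pr(X=i)$; this collapses the double sum to $-\sum_{i\in[q]}\Pr(X=i)\log\Pr(X=i)$, which is exactly $H(X)$. For the piece carrying $\log\Pr(X=i|Y=j)$, I would instead fix $j$, use $\Pr(X=i,Y=j) = \Pr(Y=j)\Pr(X=i|Y=j)$, and recognize the inner sum over $i$ as $-H(X|Y=j)$; averaging over $j$ against the marginal weights $\Pr(Y=j)$ then yields $-H(X|Y)$ by \pref{def:conditionalentropy}. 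Combining the two pieces gives $I(X;Y) = H(X) - H(X|Y)$, as claimed.

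There is no substantive obstacle here: the statement is the standard ``mutual information equals entropy minus conditional entropy'' identity, and the argument is a short symbol-pushing exercise. The only points meriting a word of care are the usual convention $0\log 0 = 0$ (so that summands with $\Pr(X=i,Y=j)=0$ may be dropped and the chain-rule rewriting is valid on the support of the joint distribution), and the understanding that the expectation $\E$ in \pref{def:conditionalentropy} is taken against the marginal of $Y$, i.e. $H(X|Y) = \sum_{j\in[q]}\Pr(Y=j)\,H(X|Y=j)$. With these conventions in place the manipulation above is exact, and the proof is complete.
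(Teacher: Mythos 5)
Your proof is correct: the paper states this identity without proof as a standard fact of information theory, and your argument is exactly the canonical derivation one would supply (expand the definitions, split the logarithm via the chain rule, and marginalize). Your two caveats — the $0\log 0=0$ convention and reading the expectation in the definition of $H(X|Y)$ as being against the marginal of $Y$ — are the right ones, and nothing further is needed.
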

\begin{theorem} \label{thm:dataprocessing} (Data Processing
Inequality)
Let $X,Y,Z,W$ be random variables such that $H(X|W)=0$ and $H(Y|Z)=0$, \ie $X$ is fully determined by $W$ and $Y$ is fully determined by $Z$, then
$$
I(X;Y)\leq I(W;Z)
$$
\end{theorem}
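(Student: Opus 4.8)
The plan is to reduce the two-sided inequality to a single one-sided lemma and then invoke the symmetry of mutual information. The lemma I would isolate is: \emph{whenever $A'$ is a deterministic function of $A$ (that is, $H(A'\mid A)=0$) and $B$ is any jointly distributed variable, then $I(A';B)\le I(A;B)$.} Granting this, I would finish as follows: let $f,g$ be functions with $X=f(W)$ and $Y=g(Z)$, which exist precisely because $H(X\mid W)=0$ and $H(Y\mid Z)=0$. Applying the lemma once with $(A,A',B)=(W,X,Z)$ gives $I(X;Z)\le I(W;Z)$, and applying it once more with $(A,A',B)=(Z,Y,X)$ gives $I(Y;X)\le I(Z;X)$. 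Since $I(\cdot;\cdot)$ is symmetric in its arguments, chaining these two inequalities yields $I(X;Y)=I(Y;X)\le I(Z;X)=I(X;Z)\le I(W;Z)$, which is the claim.

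To prove the lemma I would use \pref{thm:entropy-info} in its symmetric form $I(U;B)=H(B)-H(B\mid U)$, which gives $I(A;B)-I(A';B)=H(B\mid A')-H(B\mid A)$. So the lemma is exactly the assertion that conditioning $B$ on the finer variable $A$ cannot leave more entropy than conditioning on $A'=f(A)$. To see this, group the values $a$ of $A$ by their image $a'=f(a)$: this rewrites $H(B\mid A)$ as $\E_{a'\sim A'}\bigl[\E_{a\sim(A\mid A'=a')}[H(B\mid A=a)]\bigr]$, whereas $H(B\mid A')=\E_{a'\sim A'}[H(B\mid A'=a')]$. Hence it suffices to show, for each fixed $a'$, that $\E_{a\sim(A\mid A'=a')}[H(B\mid A=a)]\le H(B\mid A'=a')$; since the law of $B$ given $A'=a'$ is the $(A\mid A'=a')$-mixture of the laws of $B$ given $A=a$, this is precisely $I(A;B\mid A'=a')\ge 0$.

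The only genuine ingredient left is therefore the non-negativity of mutual information (equivalently, of KL divergence), which I would dispatch in one line of Jensen's inequality: writing $p_{ij}=\Pr(A=i,B=j\mid A'=a')$ with marginals $p_i,q_j$ and summing over pairs with $p_{ij}>0$, concavity of $\log$ gives $\sum_{i,j}p_{ij}\log\frac{p_iq_j}{p_{ij}}\le\log\bigl(\sum_{i,j}p_iq_j\bigr)\le\log 1=0$, hence $I(A;B\mid A'=a')=\sum_{i,j}p_{ij}\log\frac{p_{ij}}{p_iq_j}\ge 0$; averaging over $a'\sim A'$ completes the lemma and hence the theorem. I expect no conceptual obstacle here — this is just the classical data-processing inequality specialized to deterministic maps — and the only points needing a little care are the refinement bookkeeping in the middle step (that conditioning on $A$ really does refine conditioning on $A'=f(A)$, via the law of total expectation) and the usual $0\log 0=0$ conventions in the Jensen step.
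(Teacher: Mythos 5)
Your proof is correct and complete: the reduction to the one-sided lemma via symmetry of $I(\cdot;\cdot)$, the refinement/mixture argument for $H(B\mid A)\le H(B\mid A')$ when $A'=f(A)$, and the Jensen step for non-negativity of conditional mutual information are all sound (including the observation, implicit in your bookkeeping, that $H(B\mid A=a,A'=a')=H(B\mid A=a)$ because $A$ determines $A'$). The paper itself states the data processing inequality as a well-known fact and gives no proof, so there is nothing to compare against; what you have written is the standard textbook argument, correctly executed.
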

\fi

\paragraph{Lasserre SDP Hierarchy for Globally Constrained CSPs} \label{subsec:lasserresdp}

\ifnum\full=1
Let $\Lambda=([q],\mathbb{P},k,c)$ be a CSP with global constraints
and $\Phi=(\mathcal{V},\mathbb{P}_{\mathcal{V}},W)$ be an instance of
$\Lambda$ on variables $X=\{x_1,...,x_n\}$. A solution to the
$k$-round Lasserre SDP consists of vectors $v_{S,\alpha}$ for all
vertex sets $S\subseteq V$ with $|S|\leq k$ and local assignments
$\alpha\in [q]^{S}$. Also for each subset $S\subseteq V$ with $|S|\leq
k$, there is a distribution $\mu_S$ on $[q]^{S}$. For two subsets
$S,T$ such that $|S|,|T|\leq k$, we require that the corresponding
distributions $\mu_S$ and $\mu_T$ are consistant when restricted to $S\cap T$. A Lasserre solution is feasible if for any $|S\cup T|\leq k$, $\alpha\in [q]^S$, $\beta\in [q]^T$, we have
$$
\langle v_{S,\alpha},v_{T,\beta}\rangle =\mathbb{P}_{\mu_{S\cup T}}\{X_S=\alpha, X_T=\beta\}
$$
The SDP also has a vector $I$ that denotes the constant $1$.  The global cardinality constraints can be written in terms of the
marginals of each variable. Specifically, for every $S$ with $|S|\leq k-1$ and $\alpha\in [q]^{S}$, we have
$$
\mathbb{E}_j \mathbb{P}_{\mu_{S\cup \{x_j\}}} (x_j=i | X_S=\alpha)=c_i
$$
The objective of the SDP is to maximize
$$
\mathbb{E}_{S\in W}\left(\sum_{\beta\in[q]^{S}}
P_S(\beta(S))\mathbb{P}_{\mu_S}(S,\beta)\right)
$$

\fi
\ifnum\full=0
Consider a CSP with global constraint over an alphabet $[q] =
\{1,2,\ldots,q-1\}$.  For the purpose of this extended abstract, $[q]
= \{0,1\}$.
A $k$-rounds Lasserre solution consists of vectors $v_{S,\alpha}$ for
all vertex sets $S\subseteq V$ with $|S|\leq k$ and local assignments
$\alpha\in [q]^{S}$. Also for each subset $S\subseteq V$ with $|S|\leq k$, there is a distribution $\mu_S$ on $[q]^{S}$. For two subsets $S,T$ such that $|S|,|T|\leq k$, we require that $\mu_S$ and $\mu_T$ are consistant when restricted on $S\cap T$. A Lasserre solution is feasible if for any $|S\cup T|\leq k$, $\alpha\in [q]^S$, $\beta\in [q]^T$, we have
$$
\langle v_{S,\alpha},v_{T,\beta}\rangle =\mathbb{P}_{\mu_{S\cup T}}\{X_S=\alpha, X_T=\beta\}
$$
Also, the marginal distribution of each variable should match the cardinality constraints. That is, for any $S$ with $|S|\leq k-1$ and $\alpha\in [q]^{S}$, we have
$$
\mathbb{E}_j \mathbb{P}_{\mu_{S\cup \{x_j\}}} (x_j=i | X_S=\alpha)=c_i
$$
The objective of the SDP is to maximize
$$
\mathbb{E}_{S\in W}w_{S}(\sum_{\beta\in[q]^{S}} P_S(\beta(S))\mathbb{P}_{\mu_S}(S,\beta))
$$
\fi

While the complete description of the Lasserre SDP hierarchy is
somewhat complicated, there are few properties of the hierarchy that
we need. The most
important property is the existence of consistent local
marginal distributions $\{\mu_{S}\}_{S \sse V,|S| \leq k}$ whose first two
moments match the inner products of the vectors. We
stress that even though the local distributions are consistent, there
might not exist a global distribution that agrees with all of them.  The second property of the $k$-round Lasserre SDP
solution is that although the variables are not jointly distributed,
one can still \emph{condition} on the assignment to any given
variable to obtain a solution to the $k-1$ round Lasserre's SDP that
corresponds to the \emph{conditioned distribution}.

\ifnum\full=0 \vspace{-8pt}\fi
\section{Globally Uncorrelated SDP Solutions} \label{sec:uncorrelatedsol}

As remarked earlier, it is easy to round SDP solutions to a CSP with
cardinality constraint if the variables behave like {\it independent} random
variables.  In this section, we show a very simple procedure that
starts with a solution to the $(k+l)$-round Lasserre SDP and produces
a solution to the $l$-round Lasserre SDP with the additional property
that globally the variables are somewhat "uncorrelated".  To this end, we define the
notion of {\it $\alpha$-independence} for SDP solutions below.
We remark that all the definitions and results in this section can be applied to all CSPs.
%
%

\begin{definition} \label{def:alphaindependentsol}
Given a solution to the $k$-round Lasserre SDP relaxation, it is said
to be $\alpha$-independent if
$
\E_{i,j\sim W}[I_{\mu_{\{i,j\}}}(X_i;X_j)]\leq \alpha
$
where $\mu_{\{i,j\}}$ is the local distribution associated with the
pair of vertices $\{i,j\}$.
\end{definition}
\ifnum\full=1
\begin{remark}
We stress again that the variables in the SDP solution are not jointly distributed. However, the notion is still well-defined here because of the locality of mutual information: it only depends on the joint distribution of two variables, which is guaranteed to exist by the SDP. Also, $\mu_{\{i,j\}}$ in the expression can be replaced with $\mu_{S}$ for arbitrary $S$ with $i,j\in S$ and $|S|\leq k$ because of the consistency of local distributions.
\end{remark}
\fi

The notion of $\alpha$-independence of random variables using mutual
information, easily translates in to more familiar notion of
statistical distance.  Specifically, we have the following relation.
\begin{fact} \label{fact:statdist}
Let $X$ and $Y$ be two jointly distributed random variables on $[q]$
then,
$$ I(X;Y)\geq \frac{1}{2\ln2}\sum_{i,j\in
[q]}(\mathbb{P}(X=i,Y=j)-\mathbb{P}(X=i)\mathbb{P}(Y=j))^2 \mcom $$
in particular for all $i,j\in [q]$
$$
|\mathbb{P}(X=i,Y=j)-\mathbb{P}(X=i)\mathbb{P}(Y=j)|\leq \sqrt{2 I(X;Y)}
$$
As a consequence, if $X$ and $Y$ are two random variables defined on $\{-1,1\}$,
$
\mbox{Cov}(X,Y)\leq O(\sqrt{I(X;Y)})
$
\end{fact}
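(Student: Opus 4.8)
The plan is to recognize the mutual information $I(X;Y)$ as the Kullback--Leibler divergence between the joint law $P=\bigl(\Pr(X=i,Y=j)\bigr)_{i,j\in[q]}$ and the product law $Q=\bigl(\Pr(X=i)\Pr(Y=j)\bigr)_{i,j\in[q]}$, both viewed as distributions on the finite set $[q]\times[q]$; that is, $I(X;Y)=D_{\mathrm{KL}}(P\,\|\,Q)$, with logarithms to base $2$ as in \pref{def:mutualinfo}. The one substantive ingredient is then Pinsker's inequality in its bit-normalized form $D_{\mathrm{KL}}(P\,\|\,Q)\ge \frac{1}{2\ln 2}\bigl(\sum_{i,j}\lvert P_{ij}-Q_{ij}\rvert\bigr)^2$.

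From this, all three assertions are routine. For the first, I would use that the $\ell_1$ norm dominates the $\ell_2$ norm, $\bigl(\sum_{i,j}\lvert P_{ij}-Q_{ij}\rvert\bigr)^2\ge \sum_{i,j}(P_{ij}-Q_{ij})^2$, which plugged into Pinsker gives exactly $I(X;Y)\ge \frac{1}{2\ln 2}\sum_{i,j}\bigl(\Pr(X=i,Y=j)-\Pr(X=i)\Pr(Y=j)\bigr)^2$. For the second, each squared term is at most the whole sum, which by the first part is at most $2\ln 2\cdot I(X;Y)\le 2\,I(X;Y)$; taking square roots yields the pointwise bound $\lvert \Pr(X=i,Y=j)-\Pr(X=i)\Pr(Y=j)\rvert\le\sqrt{2I(X;Y)}$. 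For the third, when $X,Y\in\{-1,1\}$ I would expand $\mathrm{Cov}(X,Y)=\E[XY]-\E[X]\E[Y]=\sum_{s,t\in\{-1,1\}}st\,(P_{st}-Q_{st})$ and apply the triangle inequality together with the second part to the four terms, obtaining $\lvert\mathrm{Cov}(X,Y)\rvert\le 4\sqrt{2I(X;Y)}=O(\sqrt{I(X;Y)})$.

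If a self-contained argument is preferred over citing Pinsker, I would instead derive the pointwise bound directly from the Data Processing Inequality already stated as \pref{thm:dataprocessing}: fixing $i,j$, set $A=\mathbf 1[X=i]$ and $B=\mathbf 1[Y=j]$, so that $H(A\mid X)=H(B\mid Y)=0$ and hence $I(A;B)\le I(X;Y)$; it then suffices to prove the scalar estimate $I(A;B)\ge \frac{2}{\ln 2}\bigl(\Pr(A{=}1,B{=}1)-\Pr(A{=}1)\Pr(B{=}1)\bigr)^2$ for a pair of Bernoulli variables, which is a one-parameter convexity computation. Since the marginals are fixed, all four entries of the $2\times 2$ table $P-Q$ have the same absolute value, so in the Boolean case (the one relevant to this paper) summing this scalar bound over the four entries recovers the first assertion with the stated constant $\frac{1}{2\ln 2}$; for general $q$ one loses a factor depending on $q$, which is still harmless for the applications, and one simply keeps Pinsker if the sharp constant is wanted. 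I do not expect any real obstacle: the entire content is the comparison of KL divergence with total variation, and everything else is bookkeeping with norms and the definition of covariance.
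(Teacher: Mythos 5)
Your proof is correct, and it shares the paper's overall architecture (identify $I(X;Y)$ as $D_{\mathrm{KL}}$ between the joint law and the product of marginals, invoke a standard lower bound on KL by a squared distance, then finish with an elementary norm comparison), but the key inequality you invoke is different. The paper bounds $D_{\mathrm{KL}}(P\|Q)\ge \frac{2}{\ln 2}H^2(P,Q)$ via the squared Hellinger distance and then estimates each term $(\sqrt{P_{ij}}-\sqrt{Q_{ij}})^2=\bigl(\tfrac{P_{ij}-Q_{ij}}{\sqrt{P_{ij}}+\sqrt{Q_{ij}}}\bigr)^2\ge \tfrac14 (P_{ij}-Q_{ij})^2$, whereas you use Pinsker's inequality $D_{\mathrm{KL}}(P\|Q)\ge \frac{1}{2\ln 2}\bigl(\sum_{i,j}|P_{ij}-Q_{ij}|\bigr)^2$ followed by $\ell_1^2\ge\ell_2^2$. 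Both routes are standard comparisons of KL with a statistical distance and, as you note, land on the identical constant $\frac{1}{2\ln 2}$; your second and third assertions (pointwise bound by dropping all but one term, and the covariance bound by expanding $\E[XY]-\E[X]\E[Y]$ over the four entries) are exactly routine and match what the paper needs. Your fallback via the data processing inequality and a Bernoulli reduction is also sound for the Boolean case the paper actually uses, though it is unnecessary given that Pinsker already does the job with the sharp constant.
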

\ifnum\full=1
For the sake of completeness, we include the proof of this observation
in \pref{app:informationtheory}.
\fi
Now we describe the procedure of getting an $\alpha$-independent
$l$-rounds Lasserre's solution.  A similar argument was concurrently discovered in \cite{BarakRS11}.
Here we reproduce the argument in information theoretic terms, while
\cite{BarakRS11} present the argument in terms of covariance.  The
information theoretic argument is somewhat robust and cleaner in that it is
independent of the sample space involved.

\begin{mybox}
\begin{algorithm}

{\bf Input}: A feasible solution to the $(k+l)$ round Lasserre SDP
relaxation as described in \pref{sec:prelim} for $k=1/\sqrt{\alpha}$.

{\bf Output}: An $\alpha$-independent solution to the $l$ round Lasserre SDP
relaxation.\\

Sample indices ${i_1},\ldots,{i_{k}} \subseteq V$ independently according to $W$.  Set $t = 1$.

Until the SDP solution is $\alpha$-independent repeat
\begin{itemize}\itemsep=0ex
	\item	Sample the variable $X_{i_t}$ from its marginal distribution after the
	first $t-1$ fixings, and condition the SDP solution on the outcome.
\item $t = t+1$.
\end{itemize}
\end{algorithm}
\end{mybox}

The following lemma shows that there exists $t$ such that the
resulting solution is $\alpha$-independent after $t$-conditionings with high probability.
\begin{lemma} \label{lem:existlowmi}
There exists $t\leq k$ such that $\E_{i_1,\ldots,i_t \sim
W}\E_{i,j\sim W}[I(X_i,X_j|X_{i_1},\ldots,X_{i_{t}})]\leq \frac{\log q}{k-1}$
\end{lemma}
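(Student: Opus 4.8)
The plan is to prove \pref{lem:existlowmi} by a potential-function argument, using the average entropy $\E_{i\sim W}[H(X_i)]$ as the potential. First I would set up notation: after conditioning on the outcomes of $X_{i_1},\dots,X_{i_{t-1}}$, let $\Phi_t \defeq \E_{i_1,\dots,i_{t-1}\sim W}\,\E_{i\sim W}\bigl[H(X_i \mid X_{i_1},\dots,X_{i_{t-1}})\bigr]$ be the expected residual entropy of a random variable. Since each $X_i$ takes values in $[q]$, we have $0 \le \Phi_t \le \log q$ for every $t$, and conditioning never increases entropy so $\Phi_1 \ge \Phi_2 \ge \dots \ge \Phi_{k}$. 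The key claim is the one-step drop inequality: if the $(t-1)$-conditioned solution (in expectation over the first $t-1$ choices) is \emph{not} $\alpha$-independent, then $\Phi_t - \Phi_{t+1} \ge$ (roughly) the average mutual information at step $t$, which is $> \alpha$. Actually, the cleanest route is to prove the telescoping bound
$$
\sum_{t=1}^{k-1} \E_{i_1,\dots,i_t\sim W}\,\E_{i,j\sim W}\bigl[I(X_i;X_j \mid X_{i_1},\dots,X_{i_t})\bigr] \;\le\; \Phi_1 - \Phi_{k} \;\le\; \log q,
$$
from which \pref{lem:existlowmi} follows immediately by averaging: some term in the sum of $k-1$ nonnegative terms is at most $\frac{\log q}{k-1}$.

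The heart of the argument is therefore the per-step inequality
$$
\E_{i,j\sim W}\bigl[I(X_i;X_j \mid X_{i_1},\dots,X_{i_{t-1}})\bigr] \;\le\; \E_{j\sim W}\bigl[H(X_j\mid X_{i_1},\dots,X_{i_{t-1}})\bigr] - \E_{i_t,j\sim W}\bigl[H(X_j\mid X_{i_1},\dots,X_{i_{t-1}},X_{i_t})\bigr],
$$
where all outer expectations over $i_1,\dots,i_{t-1}$ are understood. To see this, fix the conditioning on the first $t-1$ variables and work inside a single local distribution $\mu_S$ containing $i_t, j$ (legitimate by consistency of the local marginals, provided $k+l$ is large enough that all sets stay within the round budget). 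By \pref{thm:entropy-info}, $I(X_{i_t};X_j) = H(X_j) - H(X_j\mid X_{i_t})$, so taking $\E_{i_t\sim W}$ of both sides, the right-hand side of the displayed inequality equals exactly $\E_{i_t\sim W}[I(X_{i_t};X_j)]$; then taking $\E_{j\sim W}$ and using the symmetry of the index distribution $W$ gives precisely $\E_{i,j\sim W}[I(X_i;X_j\mid\cdots)]$. Summing over $t = 1,\dots,k-1$ telescopes the entropy differences into $\Phi_1 - \Phi_{k} \le \log q - 0$.

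The main obstacle — really the only subtlety — is bookkeeping about which conditional mutual informations and entropies are actually \emph{well-defined} given only a $(k+l)$-round Lasserre solution: each expression $H(X_j \mid X_{i_1},\dots,X_{i_{t-1}})$ or $I(X_i;X_j\mid X_{i_1},\dots,X_{i_{t-1}})$ must be read off a genuine local distribution $\mu_S$ with $S \supseteq \{i,j,i_1,\dots,i_{t-1}\}$, so we need $|S| \le t+1 \le k+1$ to remain within the hierarchy, and we need the ``conditioning'' operation of the hierarchy (as recalled in \pref{sec:prelim}) to behave like genuine probabilistic conditioning on these marginals. Since $k = 1/\sqrt{\alpha}$ conditionings are performed and each consumes one round, starting from $k+l$ rounds leaves at least $l$ rounds, which is what the algorithm claims to output; I would just remark that the consistency of the local distributions makes all the quantities above unambiguous and makes the chain-rule/entropy identities valid within each $\mu_S$. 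Everything else is the elementary information theory of \pref{thm:entropy-info} together with the trivial bounds $0 \le H(X_i) \le \log q$.
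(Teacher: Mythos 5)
Your proposal is correct and is essentially the paper's own argument: the same telescoping identity $\E[H(X_i\mid X_{i_1},\dots,X_{i_t})]=\E[H(X_i\mid X_{i_1},\dots,X_{i_{t-1}})]-\E[I(X_i;X_{i_t}\mid X_{i_1},\dots,X_{i_{t-1}})]$, summed over $t$ and bounded by the initial entropy $\log q$, followed by averaging over the $k-1$ nonnegative terms. Your added bookkeeping about which local distributions $\mu_S$ make the conditional quantities well-defined is a reasonable remark that the paper leaves implicit.
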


\begin{proof}
By linearity of expectation, we have that for any $t\leq k-2$
$$
\E_{i,i_1,\ldots,i_t \sim W}[H(X_i|X_{i_1},\ldots,X_{i_t})]=
\E_{i,i_1,\ldots,i_t \sim W} [H(X_i|X_{i_1},\ldots,X_{i_{t-1}})]-
\E_{i_1,\ldots,i_{t-1} \sim W}\E_{i,i_t \sim W}[
I(X_i,X_{i_t}|X_{i_1},\ldots,X_{i_{t-1}})]
$$
adding the equalities from $t=1$ to $t=k-2$, we get
$$
\E_{i \sim W}[H(X_i)]-\E_{i_{1},\ldots,i_{k-2}\sim
W}[H(X_i|X_{i_1},\ldots,X_{i_{k-2}})]=\sum_{1\leq t\leq
k-1}\E_{i, j, i_1,\ldots,i_{t-1} \sim
W}[I(X_i,X_j|X_{i_1},\ldots,X_{i_{t-1}})]
$$
The lemma follows from the fact that for each $i$, $H(X_i)\leq \log q$.
\end{proof}

\begin{theorem}  \label{thm:alphaindependentsol}
For every $\alpha>0$ and positive integer $\ell$, there exists an algorithm running in time
$O(n^{poly(1/\alpha)+\ell})$ that finds an
$\alpha$-independent solution to the $\ell$-round Lasserre SDP,
with an SDP objective value of at least $\OPT-\alpha$, where $\OPT$
denotes the optimum value of the $\ell$-round Lasserre SDP relaxation.
\end{theorem}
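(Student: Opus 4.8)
The plan is to instantiate the conditioning algorithm displayed above with $k$ equal to a suitable $\poly(1/\alpha)$, and to pair \pref{lem:existlowmi} (which bounds the mutual information) with the observation that the SDP objective behaves like a martingale under conditioning (which controls the value loss). Since the alphabet size $q$ is a fixed constant, I would set $k=\Theta(\log q/\alpha^{2})$ so that the bound $\log q/(k-1)$ of \pref{lem:existlowmi} is at most $\alpha^{2}/10$. First I would solve the $(k+\ell)$-round Lasserre relaxation of the instance to near-optimality: this is a semidefinite program with $n^{O(k+\ell)}$ variables and constraints, hence solvable in time $n^{O(k+\ell)}=n^{\poly(1/\alpha)+O(\ell)}$; call its optimum $\OPT$ (by monotonicity of the hierarchy this is at most the $\ell$-round optimum and at least the integral optimum, which is all that the downstream rounding actually needs).

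Next I would run the conditioning process from the displayed algorithm on this solution. Two facts drive the analysis. (i) By \pref{lem:existlowmi} there is a step $t^{*}\le k-1$ at which the expected average pairwise mutual information is at most $\alpha^{2}/10$; Markov's inequality then gives that the process is $\alpha$-independent by step $t^{*}$ with probability at least $1-\alpha/10$, so with that probability at most $k-1$ variables are ever conditioned and, using the conditioning property recorded in \pref{sec:prelim} (conditioning a $(k+\ell)$-round solution on a variable drawn from its current marginal yields a $(k+\ell-1)$-round solution of the same CSP with cardinality constraints), the output is a feasible solution of the $\ell$-round — indeed $(\ell+1)$-round — Lasserre relaxation. (ii) The objective $\E_{S\sim W}\E_{\beta\sim\mu_{S}}[P_{S}(\beta)]$ is linear in the local distributions, and averaging a conditioned local distribution over the conditioned value returns the original local distribution; hence the objective value is a bounded martingale along the conditioning steps, so by optional stopping its expectation at the stopping time is exactly $\OPT$. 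Since the objective lies in $[0,1]$ and has mean $\OPT$, it is at least $\OPT-\alpha$ with probability at least $\alpha/2$; combining with (i) via a union bound, a single run produces, with probability at least $2\alpha/5$, a solution that is simultaneously $\alpha$-independent, feasible for the $\ell$-round relaxation, and of value at least $\OPT-\alpha$.

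Finally I would make the algorithm deterministic by brute force: enumerate all $\sum_{t\le k-1}(nq)^{t}=n^{O(k)}$ sequences of at most $k-1$ conditionings, compute the conditioned solution for each (polynomial-time linear algebra per conditioning step), test $\alpha$-independence (a finite sum over ordered pairs $i,j$), and output the $\alpha$-independent candidate of largest objective value; the previous paragraph guarantees some enumerated candidate has value at least $\OPT-\alpha$. The total running time is $n^{O(k+\ell)}=n^{\poly(1/\alpha)+O(\ell)}$, as claimed. The one step I expect to need care with is the simultaneity in (i)+(ii): the mutual information only decreases \emph{on average over} the conditioning randomness while the objective is only preserved \emph{in expectation}, so $k$ must be taken a little larger than the naive $\Theta(1/\alpha)$ (namely $\Theta(1/\alpha^{2})$) to make the ``still-correlated'' and ``value-dropped'' events union-boundable together; everything else is routine bookkeeping on top of \pref{lem:existlowmi}.
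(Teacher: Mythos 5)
Your proposal is correct and follows essentially the same route as the paper's proof: pick $k=\Theta(\log q/\alpha^2)$, solve the $(k+\ell)$-round relaxation, invoke \pref{lem:existlowmi} with Markov for $\alpha$-independence, use preservation of the objective in expectation plus boundedness in $[0,1]$ to bound the value-loss probability by $1/(1+\alpha)$, union-bound the two bad events, and derandomize by brute-force enumeration of the conditioning sequences. Your treatment is if anything slightly more careful than the paper's (e.g., flagging that $\OPT$ of the $(k+\ell)$-round program may be smaller than the $\ell$-round optimum, and making the enumeration count explicit), but the argument is the same.
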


\begin{proof}
Pick $k=\frac{4\log q}{\alpha^2}$.  Solve the $k+\ell$ round Lasserre
SDP solution, and use it as input to the conditioning algorithm
described earlier.  Notice that the algorithm respects the marginal
distributions provided by the SDP while sampling the values to
variables.  Therefore, the expected objective value of the SDP
solution after conditioning is exactly equal to the SDP objective
value before conditioning. Also notice that the SDP value is at most
$1$. Therefore, the probability of the SDP value dropping by at least
$\alpha$ due to conditioning is at most $1/(1+\alpha)$.

Also, by \pref{lem:existlowmi} and Markov Inequality, the probability
of the algorithm failing to find a $\sqrt{\frac{\log
q}{k}}$-independent soluton is at most $\sqrt{\frac{\log q}{k}}$.
Therefore, by union bound, there exists a fixing such that the SDP
value is maintained up to $\alpha$, and the solution after
conditioning is $\alpha$-independent. Moreover, this particular fixing
can be found using brute-force search.

\end{proof}

\ifnum\full=0 \vspace{-8pt}\fi
\section{Rounding Scheme for \maxbisection} \label{sec:rounding}

In this section, we present and analyze a natural rounding scheme for
\maxbisection.  Given an globally uncorrelated SDP solution to a
$2$-round Lasserre SDP relaxation of \maxbisection, the rounding
scheme will output a cut with the approximation guarantees outlined in
\pref{thm:max-bisection}.
The same rounding scheme also yields a $0.92$-approximation algorithm for arbitrary globally constrained \maxtwosat problem.

\paragraph{Constructing Goemans-Williamson type SDP solution}
In the $2$-round Lasserre SDP for \maxbisection, there are two orthogonal vectors
$v_{i0}$ and $v_{i1}$ for each variable $x_i$.  This can be used to
obtain a solution to the Goemans-Williamson SDP solution by simply defining $v_i
\defeq v_{i0} - v_{i1}$.  The following proposition is an easy consequence,
\begin{proposition}
Let $v_i=v_{i0}-v_{i1}=(2p_i-1)I+w_i$ where $p_i=\Pr(x_i=0)$. Then, for each edge $e=(i,j)\in E$, $\mathbb{P}_{\mu_e}(x_i\neq x_j)=\|v_i-v_j\|^2/4$.
\end{proposition}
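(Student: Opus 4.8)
The plan is to unpack the defining inner-product identities of the $2$-round Lasserre solution and reduce everything to a one-line computation involving $\langle v_i,v_j\rangle$. First I would record the basic facts about the single-variable vectors. Taking $S=T=\{i\}$ in the consistency relation $\langle v_{S,\alpha},v_{T,\beta}\rangle=\mathbb{P}_{\mu_{S\cup T}}\{X_S=\alpha,X_T=\beta\}$ gives $\langle v_{i0},v_{i1}\rangle=\mathbb{P}_{\mu_{\{i\}}}(x_i=0\wedge x_i=1)=0$, so $v_{i0}\perp v_{i1}$; likewise $\|v_{i0}\|^2=p_i$, $\|v_{i1}\|^2=1-p_i$, and $\|I\|^2=1$ (taking $S=T=\emptyset$). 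I would also note that $v_{i0}+v_{i1}=I$: for every $T$ with $|T|\le 1$ and every $\beta$ we have $\langle v_{i0}+v_{i1},v_{T,\beta}\rangle=\mathbb{P}_\mu(x_i=0,X_T=\beta)+\mathbb{P}_\mu(x_i=1,X_T=\beta)=\mathbb{P}_\mu(X_T=\beta)=\langle I,v_{T,\beta}\rangle$, and since all SDP vectors lie in the span of the $v_{T,\beta}$, equality of inner products against this spanning family forces $v_{i0}+v_{i1}=I$ as vectors.

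From these facts the decomposition in the statement is immediate: $\|v_i\|^2=\|v_{i0}\|^2+\|v_{i1}\|^2-2\langle v_{i0},v_{i1}\rangle=1$, and $\langle I,v_i\rangle=\langle I,v_{i0}\rangle-\langle I,v_{i1}\rangle=p_i-(1-p_i)=2p_i-1$, so setting $w_i:=v_i-(2p_i-1)I$ we get $w_i\perp I$, which is exactly the claimed form $v_i=(2p_i-1)I+w_i$. Equivalently, $v_{i0}=\tfrac12(I+v_i)$ and $v_{i1}=\tfrac12(I-v_i)$, which is the form I will use for the edge computation.

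For the edge identity I would write, using consistency with $S=\{i\}$ and $T=\{j\}$, $\mathbb{P}_{\mu_e}(x_i\neq x_j)=\mathbb{P}_{\mu_e}(x_i=0,x_j=1)+\mathbb{P}_{\mu_e}(x_i=1,x_j=0)=\langle v_{i0},v_{j1}\rangle+\langle v_{i1},v_{j0}\rangle$. Substituting $v_{i0}=\tfrac12(I+v_i)$, $v_{j1}=\tfrac12(I-v_j)$, etc., and expanding, the cross terms $\langle I,v_i\rangle$ and $\langle I,v_j\rangle$ appear with opposite signs in the two products and cancel, leaving $\tfrac14\big(2\|I\|^2-2\langle v_i,v_j\rangle\big)=\tfrac12\big(1-\langle v_i,v_j\rangle\big)$. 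On the other hand, since $v_i,v_j$ are unit vectors, $\|v_i-v_j\|^2=\|v_i\|^2+\|v_j\|^2-2\langle v_i,v_j\rangle=2\big(1-\langle v_i,v_j\rangle\big)$, so $\|v_i-v_j\|^2/4=\tfrac12\big(1-\langle v_i,v_j\rangle\big)=\mathbb{P}_{\mu_e}(x_i\neq x_j)$, as claimed.

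There is no real obstacle here; the only step that deserves a sentence of care is the justification that $v_{i0}+v_{i1}=I$ as vectors (not merely that the two sides have equal inner products with the generators), which is why I would invoke that every Lasserre vector lies in the span of the $v_{T,\beta}$. Everything else is routine bookkeeping with the first two moments that the $2$-round Lasserre SDP pins down, and the computation generalizes verbatim to any pair $\{i,j\}$ of variables, not just edges.
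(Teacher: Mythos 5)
Your proposal is correct and follows essentially the same route as the paper: the paper's one-line proof expands $\|v_i-v_j\|^2 = 2 - 2\langle v_{i0}-v_{i1},\,v_{j0}-v_{j1}\rangle$ and reads off the four inner products as $\mathbb{P}_{\mu_e}(x_i=x_j)-\mathbb{P}_{\mu_e}(x_i\neq x_j)$, which is the same bookkeeping you perform after rewriting $v_{i0}=\tfrac12(I+v_i)$, $v_{i1}=\tfrac12(I-v_i)$. Your extra care in justifying $v_{i0}+v_{i1}=I$ and the orthogonal decomposition $v_i=(2p_i-1)I+w_i$ is sound but is material the paper treats as part of the setup rather than the proof.
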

\ifnum\full=1
\begin{proof}
$$
\|v_i-v_j\|^2=2-2\langle v_{i0}-v_{i1},v_{j0}-v_{j1} \rangle = 2-2(\mathbb{P}_{\mu_e}(x_i=x_j)-\mathbb{P}_{\mu_e}(x_i\neq x_j))=4\mathbb{P}_{\mu_e}(x_i\neq x_j)
$$
\end{proof}
\fi
Let $w_i$ be the component of $v_{i}$ orthogonal to the $I$ vector, i.e.,
$ w_i \defeq (v_{i} - \iprod{v_{i},I}I) \mper $
Using $v_{i0} + v_{i1} = I$ and $\iprod{v_{i0},v_{i1}} = 0$, we get $v_{i0} = \iprod{v_{i0},I} I +
w_i/2$ and $v_{i1} = \iprod{v_{i1},I} I - w_i/2$.
%
%
%
%
%
%
%
%
%
%
We remark that $w_i$ is the crucial component that captures the \emph{correlation} between $x_i$ and other variables. To formalize this, we show the following lemma.
\begin{lemma} \label{lem:ipbound}
Let $v_i$ and $v_j$ be the unit vectors constructed above, $w_i$ and $w_j$ be the components of $v_i$ and $v_j$ that orthogonal to $I$. Then
$
|\langle w_i, w_j \rangle|\leq 4\sqrt{2I(x_i,x_j)}
$
\end{lemma}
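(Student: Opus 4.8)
The plan is to express the inner product $\langle w_i, w_j\rangle$ in terms of the joint distribution $\mu_{\{i,j\}}$, reduce it to a covariance-type quantity, and then invoke \pref{fact:statdist} which already gives $\mathrm{Cov}(X_i, X_j) \leq O(\sqrt{I(X_i;X_j)})$ for $\{-1,1\}$-valued variables. First I would recall that $v_i = v_{i0} - v_{i1}$ and $v_j = v_{j0} - v_{j1}$, so by the defining property of the Lasserre solution (inner products equal pairwise probabilities under $\mu_{\{i,j\}}$),
$$
\langle v_i, v_j\rangle = \mathbb{P}_{\mu_e}(x_i = x_j) - \mathbb{P}_{\mu_e}(x_i \neq x_j) = \E_{\mu_{\{i,j\}}}[\chi_i \chi_j],
$$
where $\chi_i := (-1)^{x_i} \in \{-1,1\}$. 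Next, since $w_i = v_i - \langle v_i, I\rangle I$ with $\langle v_i, I\rangle = \mathbb{P}(x_i=0) - \mathbb{P}(x_i=1) = \E[\chi_i]$, subtracting off the $I$-components gives precisely
$$
\langle w_i, w_j\rangle = \langle v_i, v_j\rangle - \langle v_i, I\rangle\langle v_j, I\rangle = \E[\chi_i\chi_j] - \E[\chi_i]\E[\chi_j] = \mathrm{Cov}(\chi_i, \chi_j).
$$

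With that identity in hand the lemma is essentially immediate: $\chi_i, \chi_j$ are $\{-1,1\}$-valued random variables jointly distributed according to $\mu_{\{i,j\}}$, so by the last assertion of \pref{fact:statdist}, $|\mathrm{Cov}(\chi_i,\chi_j)| = O(\sqrt{I(\chi_i;\chi_j)})$. Finally, since $\chi_i$ is a deterministic function of $x_i$ and vice versa (it is just relabeling on a two-element domain), the data processing inequality (\pref{thm:dataprocessing}), applied in both directions, gives $I(\chi_i;\chi_j) = I(x_i;x_j)$; so $|\langle w_i,w_j\rangle| = O(\sqrt{I(x_i;x_j)})$, and tracking the constant from \pref{fact:statdist} (which comes from $|\mathbb{P}(X=i,Y=j)-\mathbb{P}(X=i)\mathbb{P}(Y=j)| \leq \sqrt{2I(X;Y)}$ summed over the four sign patterns) yields the stated bound $4\sqrt{2 I(x_i,x_j)}$.

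The only place requiring care — the "main obstacle," though it is mild — is bookkeeping the exact numerical constant: one must verify that $\mathrm{Cov}(\chi_i,\chi_j) = \sum_{a,b\in\{0,1\}} (-1)^{a+b}\bigl(\mathbb{P}(x_i=a,x_j=b) - \mathbb{P}(x_i=a)\mathbb{P}(x_j=b)\bigr)$ has absolute value at most $4$ times the pointwise discrepancy bound $\sqrt{2I(x_i;x_j)}$ from \pref{fact:statdist}, using the triangle inequality over the four terms. Everything else is a direct unwinding of the definitions of $w_i$ and of the Lasserre inner-product constraints, together with the consistency of the local distributions $\mu_S$ (so that $\mu_{\{i,j\}}$ is well-defined regardless of which containing set $S$ one uses).
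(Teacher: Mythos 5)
Your proof is correct and takes essentially the same route as the paper's: both reduce $\langle w_i, w_j\rangle$ to the discrepancy between the joint distribution $\mu_{\{i,j\}}$ and the product of its marginals, and then invoke \pref{fact:statdist}. The only (cosmetic) difference is that the paper reads off $\Pr(x_i=0,x_j=0)-p_ip_j=\langle w_i,w_j\rangle/4$ directly from $v_{i0}=p_iI+w_i/2$ and bounds that single entry, whereas you express $\langle w_i,w_j\rangle$ as the covariance of the $\pm1$ relabelings and bound the four signed entries by the triangle inequality; both give the constant $4\sqrt{2}$.
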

\begin{proof}  Let $p_i \defeq \Pr(x_i = 0) = \iprod{v_{i0},I}$ and
$p_j \defeq \Pr(x_j = 0) = \iprod{v_{j0},I}$.  Notice that
$$
|\Pr(x_i=0, x_j=0)-\Pr(x_i=0)\Pr(x_j=0)|=\|\langle p_i I+w_i/2, p_j I+w_j/2 \rangle - p_i p_j\|= |\langle w_i,w_j \rangle|/4
$$
By applying \pref{fact:statdist}, we get
$
|\langle w_i,w_j \rangle|\leq 4\sqrt{2I(x_i;x_j)}
$
\end{proof}
Henceforth we will switch from the alphabet  $\{0,1\}$ to $\{-1,1\}$
\footnote{The mapping is given by $0 \rightarrow 1$ and $1 \rightarrow
-1$}. After this transformation, we can interpret the inner product $\mu_i=\langle v_i,I\rangle=p_i-(1-p_i)$ as the \emph{bias} of vertex $i$.
\ifnum\full=0 \vspace{-8pt}\fi
\subsection{Rounding Scheme}
\ifnum\full=0 \vspace{-8pt}\fi

Roughly speaking, the algorithm applies a hyperplane rounding on
the vectors $w_i = v_i - \iprod{v_i,I}I$ associated with the vertices
$i \in V$.  However, for each vertex $i \in V$, the algorithm shifts
the hyperplane according to the bias of that vertex.

\begin{mybox}
\begin{algorithm} \label{alg:rounding}
Given: A set of unit vectors $\{v_1,\ldots,v_n\}$ where $v_i=\mu_i I+w_i$, where $w_i$ is the component of $v_i$ orthogonal to $I$.

Pick a random Gaussian vector $g$ orthogonal to $I$ with coordinates distributed as $\mathcal{N}(0,1)$.
For every $i$,
\begin{enumerate}
\item
Project $g$ on the direction of $w_i$, \ie
$
\xi_i=\langle g,\bar{w_i} \rangle
$,
where $\bar{w}_i=\frac{w_i}{\sqrt{1-\mu_i^2}}$ is the normalized vector or $w_i$. Note that $\xi_i$ is also a standard Gaussian variable.
\item
Pick threshold $t_i$ as follows:
$$
t_i=\Phi^{-1}(\mu_i/2+1/2)
$$

\item
If $\xi_i\leq t_i$, set $x_i=1$, otherwise set $x_i=-1$.
\end{enumerate}
\end{algorithm}
\end{mybox}

Notice that, the threshold $t_i$ is chosen so that individually the bias of $x_i$ is exactly $\mu_i$. Therefore, the expected balance of the rounded solution matches the intended value.
The analysis of the rounding algorithm consists of two parts: first we
show that the cut returned by the rounding algorithm has high expected
value, then we show the that the balance of the cut is concentrated
around its expectation.

\ifnum\full=0 \vspace{-8pt}\fi
\subsection{Analysis of the Cut Value}
\ifnum\full=0 \vspace{-8pt}\fi
Analyzing the cut value of the rounding scheme is fairly standard
albeit a bit technical.  The analysis is {\it local} as in the case of
other algorithms for CSPs, and reduces to bounding the probability
that a given edge is cut.  The probability that a given edge $u,v$ is
cut corresponds to a probability of an event related to two correlated
Gaussians.

By using numerical techniques, we were able to show that the
cut value is at least $0.85$ times the SDP optimum.  Analytically, we
show the following asymptotic relation.
\begin{lemma} \label{lem:rooteps}
Let $u=\mu_1 I+ w_1$,$v=\mu_2 I + w_2$ be two unit vectors satisfying $\|u-v\|^2/4\leq \eps$, then the probability of them being separated by \pref{alg:rounding} is at most $O(\sqrt{\eps})$.
\end{lemma}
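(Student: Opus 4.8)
The plan is to reduce the event that \pref{alg:rounding} separates $u$ and $v$ to a question about two correlated standard Gaussians, and then estimate the resulting bivariate normal probability. Write $p_i = \tfrac{1+\mu_i}{2} = \Phi(t_i)$, $\bar w_i = w_i/\sqrt{1-\mu_i^2}$, and $\rho = \iprod{\bar w_1,\bar w_2}$. From the hypothesis $\norm{u-v}^2 \le 4\eps$ I extract two facts. Projecting onto $I$ gives $(\mu_1-\mu_2)^2 \le \norm{u-v}^2 \le 4\eps$, hence $\abs{\mu_1-\mu_2}\le 2\sqrt\eps$ and $\abs{p_1-p_2}\le\sqrt\eps$. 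Expanding $\iprod{u,v} = \mu_1\mu_2 + \iprod{w_1,w_2}$ and using the elementary inequality $\sqrt{(1-\mu_1^2)(1-\mu_2^2)} \le 1-\mu_1\mu_2$ (which is just $(\mu_1-\mu_2)^2 \ge 0$), the hypothesis rewrites as $\iprod{w_1,w_2}\ge 1-\mu_1\mu_2-2\eps$, so
\[
  1-\rho \;=\; \frac{\sqrt{(1-\mu_1^2)(1-\mu_2^2)}-\iprod{w_1,w_2}}{\norm{w_1}\norm{w_2}} \;\le\; \frac{2\eps}{\norm{w_1}\norm{w_2}}\mper
\]

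In \pref{alg:rounding} the variables $\xi_i = \iprod{g,\bar w_i}$ are standard Gaussians with correlation $\rho$, and $u,v$ are separated iff exactly one of $\{\xi_i\le t_i\}$ holds. Writing $\Phi_\rho$ for the bivariate normal c.d.f.\ with correlation $\rho$, expanding each of the two separating events using the marginals gives $\Pr[\,u,v\text{ separated}\,] = p_1+p_2-2\Phi_\rho(t_1,t_2)$. Taking $t_1\le t_2$ without loss of generality, $p_1 = \Phi_1(t_1,t_2)$ (the perfectly correlated value), so this equals $(p_2-p_1) + 2(\Phi_1(t_1,t_2)-\Phi_\rho(t_1,t_2)) \le \sqrt\eps + 2(\Phi_1(t_1,t_2)-\Phi_\rho(t_1,t_2))$. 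Differentiating the bivariate normal c.d.f.\ in the correlation parameter (Price's identity), $\frac{\partial}{\partial s}\Phi_s(a,b) = \frac{1}{2\pi\sqrt{1-s^2}}\exp\bigl(-\frac{a^2-2sab+b^2}{2(1-s^2)}\bigr)$, so integrating $s$ from $\rho$ to $1$ and bounding the exponential by $1$ yields
\[
  \Phi_1(t_1,t_2)-\Phi_\rho(t_1,t_2) \;\le\; \frac{\arccos\rho}{2\pi} \;\le\; \tfrac12\sqrt{\tfrac{1-\rho}{2}} \;\le\; O\Bigl(\sqrt{\tfrac{\eps}{\norm{w_1}\norm{w_2}}}\Bigr)\mper
\]
When both biases are bounded away from $\pm1$ by a constant this is $O(\sqrt\eps)$, since then $\norm{w_i}^2 = 1-\mu_i^2 = \Omega(1)$, and we are done.

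It remains to treat the case where some $\mu_i$ is close to $\pm1$. Then $\abs{\mu_1-\mu_2}\le 2\sqrt\eps$ forces $\mu_1,\mu_2$ to lie near the same endpoint, so $t_1,t_2$ have the same sign and $t_1t_2 \ge \min(t_1^2,t_2^2)$; writing $a^2-2sab+b^2 = (a-b)^2+2(1-s)ab$ in the integrand, the exponential is at most $e^{-\min(t_1^2,t_2^2)/2}$, giving the refined bound $\Phi_1(t_1,t_2)-\Phi_\rho(t_1,t_2) \le e^{-\min(t_1^2,t_2^2)/2}\cdot\frac{\arccos\rho}{2\pi}$. If $\max_i(1-\abs{\mu_i}) = O(\sqrt\eps)$, then already the trivial bound $\Pr[\,u,v\text{ separated}\,] \le \Pr[x_1\ne v^\star]+\Pr[x_2\ne v^\star] = O(\sqrt\eps)$ finishes, where $v^\star\in\{-1,1\}$ is the value both vertices take with probability close to $1$. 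Otherwise, using $\abs{p_1-p_2}\le\sqrt\eps$ one checks that $1-\abs{\mu_1}$ and $1-\abs{\mu_2}$ are within a constant factor of each other and both $\Omega(\sqrt\eps)$; then combining the standard Gaussian tail estimate $e^{-\min(t_1^2,t_2^2)/2} = O\bigl(\sqrt{\max_i(1-\abs{\mu_i})}\bigr)$ with $1-\rho = O(\eps/(1-\abs{\mu_i}))$ and $\arccos\rho = O(\sqrt{1-\rho})$, the $(1-\abs{\mu_i})$ factors cancel up to a bounded ratio, again leaving $O(\sqrt\eps)$.

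The main obstacle is precisely this near-deterministic regime where $1-\abs{\mu_i}$ lies between $\sqrt\eps$ and a constant: there $\arccos\rho$ can be as large as $\Theta(\eps^{1/4})$, so the correlation estimate alone loses a square root, while the trivial union bound is only a constant — one must combine the two, exploiting the Gaussian tail factor coming from the large thresholds $t_i$. Organising this case split cleanly and controlling the ratio of $1-\abs{\mu_1}$ to $1-\abs{\mu_2}$ is the only delicate point; everything else is a routine Gaussian calculation.
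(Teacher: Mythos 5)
Your proof is correct, but it takes a genuinely different route from the paper's. The paper first proves a comparison lemma (its Lemma A.4): for two vectors with \emph{equal} bias $\mu$ and $\iprod{\bar w_1,\bar w_2}\ge 0$, the bias-shifted rounding separates them with probability at most that of a random hyperplane; this is established by comparing the derivatives of the two separation probabilities as functions of the common threshold $t$, via a single-crossing lemma, a Gaussian tail estimate, a monotonicity lemma, and a numerical check at $t_0=\Phi^{-1}(\mu_0/2+1/2)$. The $O(\sqrt\eps)$ bound then follows from the Goemans--Williamson estimate $\arccos(1-\eps)/\pi=O(\sqrt\eps)$, with the negative-inner-product case killed by a union bound and the unequal-bias case reduced to the equal-bias case through an auxiliary vector $v'=\mu_1 I+\sqrt{1-\mu_1^2}\,\bar w_2$ and the observation that $\Pr(F(v')\ne F(v))\le|\mu_1-\mu_2|/2$. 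You instead write the separation probability in closed form as $p_1+p_2-2\Phi_\rho(t_1,t_2)$ and control $\Phi_1(t_1,t_2)-\Phi_\rho(t_1,t_2)$ by integrating Plackett's/Price's identity in the correlation parameter, which re-derives the $\arccos\rho$ factor directly and absorbs the unequal-bias case into the single term $|p_1-p_2|\le\sqrt\eps$; the only work is the case analysis on $1-|\mu_i|$, where you correctly identify and handle the intermediate regime $\sqrt\eps\lesssim 1-|\mu_i|\lesssim 1$ by keeping the Gaussian factor $e^{-\min(t_1^2,t_2^2)/2}=O(\sqrt{\max_i(1-|\mu_i|)})$ from the integrand so that it cancels the $\sqrt{1/(1-|\mu_i|)}$ loss in $\arccos\rho$. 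Your argument is more self-contained and avoids both the calculus-heavy derivative comparison and the numerical verification; what it does not give is the paper's stronger intermediate statement that the rounding dominates hyperplane rounding pointwise for equal biases, which the paper also leans on for its constant-factor (0.85) guarantee.
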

\ifnum\full=1
The proof of this lemma is fairly technical and is deferred to \pref{app:cutvalue}.
\fi
\ifnum\full=0
The proof of this lemma is fairly technical and is deferred to the
full version.
\fi
\ifnum\full=0 \vspace{-8pt}\fi
\subsection{Analysis of the Balance}
\ifnum\full=0 \vspace{-8pt}\fi
In this section we show that the balance of the rounded solution will be highly concentrated. We prove this fact by bounding the variance of the balance. Specifically, we show that if the SDP solution is $\alpha$-independent, then the variance of the balance can be bounded above by a function of $\alpha$.

The proof in this section is information theoretical -- although this approach gives sub-optimal bound, but the proof itself is very simple and clean.

\begin{lemma} \label{lem:lowmi}
Let $v_i=\mu_i I+w_i$ and $v_j=\mu_j I+w_j$ be two vectors in the SDP solution that satisfy $|\langle w_i, w_j\rangle|\leq \zeta$. Let $y_i$ and $y_j$ be the rounded solution of $v_i$ and $v_j$, then
$
I(y_i;y_j)\leq O(\zeta^{1/3})
$

\end{lemma}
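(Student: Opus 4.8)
The plan is to relate the mutual information $I(y_i;y_j)$ of the two rounded $\pm1$-variables back to the geometry of the vectors, specifically to the inner product $\iprod{w_i,w_j}$, and then invoke the hypothesis $|\iprod{w_i,w_j}| \leq \zeta$. Since $y_i$ and $y_j$ are each a deterministic function of a pair of jointly Gaussian variables $(\xi_i,\xi_j)$ with correlation $\rho_{ij} = \iprod{\bar w_i,\bar w_j} = \iprod{w_i,w_j}/\sqrt{(1-\mu_i^2)(1-\mu_j^2)}$, the first step is to observe that the joint distribution of $(y_i,y_j)$ is entirely determined by the three numbers $\mu_i$, $\mu_j$, $\rho_{ij}$ (the thresholds $t_i,t_j$ are functions of the biases alone). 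In particular each $y_i$ has the fixed marginal $\Pr(y_i=1) = \mu_i/2 + 1/2$, so the only thing that varies is the correlation, and $I(y_i;y_j)$ vanishes exactly when $\rho_{ij}=0$.

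First I would reduce $I(y_i;y_j)$ to the covariance of the rounded bits. Because $y_i,y_j \in \{-1,1\}$, the joint law is a $2\times 2$ table pinned down by the marginals plus $\mathrm{Cov}(y_i,y_j)$, and $I(y_i;y_j)$ is a smooth function of that table which is $0$ when $\mathrm{Cov}=0$; a Taylor/Pinsker-type estimate (or directly \pref{fact:statdist} read in the other direction — small correlation of $\pm1$ variables forces small statistical distance from the product, hence small KL, hence small mutual information) gives $I(y_i;y_j) \leq O(|\mathrm{Cov}(y_i,y_j)|)$ up to factors depending on how close the $\mu$'s are to $\pm1$, and one checks the estimate degrades gracefully (at worst polynomially) in that regime so that an unconditional bound $I(y_i;y_j) = O(|\mathrm{Cov}(y_i,y_j)|^{c})$ holds. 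The second step is the Gaussian computation: $\mathrm{Cov}(y_i,y_j)$ is an explicit function of $(\mu_i,\mu_j,\rho_{ij})$ — essentially an orthant-probability expression $\Pr[\xi_i \le t_i, \xi_j \le t_j]$ for bivariate normals — and by the standard fact that this probability is Lipschitz (indeed real-analytic) in $\rho$ with derivative the Gaussian density, one gets $|\mathrm{Cov}(y_i,y_j)| \leq O(|\rho_{ij}|)$ when $\rho_{ij}$ is small, and $|\mathrm{Cov}(y_i,y_j)| = O(|\rho_{ij}|^{1/3})$ uniformly (the cube root being the price for the cases where $\mu_i$ or $\mu_j$ is extreme and the Gaussian density at the threshold is tiny, forcing a weaker but still nontrivial modulus of continuity). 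Chaining the two steps, $I(y_i;y_j) = O(|\rho_{ij}|^{1/3}) \leq O(\iprod{w_i,w_j}^{1/3}) \le O(\zeta^{1/3})$, where the middle inequality uses $|\rho_{ij}| \le |\iprod{w_i,w_j}|/\sqrt{(1-\mu_i^2)(1-\mu_j^2)}$ together with the observation that the denominator can only hurt in the same extreme-bias regime already absorbed into the cube-root.

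The main obstacle is handling the regime where one of the biases $\mu_i$ is very close to $\pm1$: there the threshold $t_i$ is far out in the Gaussian tail, the normalization $\sqrt{1-\mu_i^2}$ blows up, and a naive Lipschitz bound in $\rho$ is useless. The trick will be that in exactly that regime $\Pr(y_i=1)$ is close to $0$ or $1$, so $y_i$ carries little entropy to begin with — $H(y_i)$ is small — and $I(y_i;y_j) \le H(y_i)$ is automatically tiny; so one splits into the "balanced-ish" case (both $|\mu_i|,|\mu_j|$ bounded away from $1$, where the clean linear-in-$\rho$ estimate applies) and the "extreme" case (use $I \le \min(H(y_i),H(y_j))$ directly), and checks the two bounds glue together to give a single $O(\zeta^{1/3})$ with no hidden dependence on the $\mu$'s. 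I would not grind through the exact constants or the precise threshold at which to switch cases; the point is that both pieces are elementary and the exponent $1/3$ is simply what makes the worst case of the gluing go through.
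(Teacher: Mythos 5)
Your overall architecture is the same as the paper's: factor $\iprod{w_i,w_j}=\sqrt{1-\mu_i^2}\sqrt{1-\mu_j^2}\,\iprod{\bar w_i,\bar w_j}$, observe that one of the three factors must be at most $\zeta^{1/3}$, dispose of the extreme-bias case via $I(y_i;y_j)\leq H(y_i)=O(-(1-\mu_i)\log(1-\mu_i))=O(\zeta^{1/3})$, and in the remaining case exploit that the Gaussian correlation $\rho=\iprod{\bar w_i,\bar w_j}$ is at most $\zeta^{1/3}$. Where you diverge is in how the main case is closed. The paper does it in one line: $y_i$ is a deterministic function of $g_i=\iprod{g,\bar w_i}$, so by the data processing inequality (\pref{thm:dataprocessing}) $I(y_i;y_j)\leq I(g_i;g_j)=-\tfrac12\log(1-\rho^2)=O(\zeta^{2/3})$. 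You instead pass through the covariance of the rounded bits (via continuity of the bivariate orthant probability in $\rho$) and then convert small covariance of binary variables back into small mutual information. That detour buys nothing here and contains the one genuinely delicate step of your plan: the implication ``small covariance $\Rightarrow$ small mutual information'' is a \emph{reverse} Pinsker-type statement, and it cannot be obtained by ``reading \pref{fact:statdist} in the other direction'' --- that fact bounds statistical distance by mutual information, and the converse direction degrades as the marginals approach $0$ or $1$ (the correct general bound carries a $\log(1/\Delta)$ loss and needs its own case analysis). You do supply the right escape hatch (extreme marginals force $H(y_i)$ small), so the argument can be completed, but you would have to actually prove the reverse inequality with explicit dependence on the marginals and verify the gluing. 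The data-processing route avoids the entire issue, which is precisely why the paper phrases the analysis information-theoretically.
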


\begin{proof}
Since
$$
|\langle w_i,w_j \rangle |=\sqrt{1-\mu_i^2}\sqrt{1-\mu_j^2}|\langle \bar{w}_i,  \bar{w}_j\rangle |\leq \zeta
$$

It implies that one of the three quantities in the equation above is at most $\zeta^{1/3}$.
If it is the case that $\sqrt{1-\mu_i^2}\leq \zeta^{1/3}$ or $\sqrt{1-\mu_j^2}\leq \zeta^{1/3}$ (w.l.o.g we can assume it's the first case), then we have
$$
\min(|1-\mu_i|,|1+\mu_i|)\leq O(\zeta^{2/3})
$$
We may assume $\mu_i>0$, therefore $1-\mu_i<O(\zeta^{2/3})$.
Notice that our rounding scheme preserves the bias individually, which implies $y_i$ is a highly biased binary variable, hence
$$
I(y_i,y_j)\leq H(y_i) = O(-(1-\mu_i)\log(1-\mu_i))\leq O(\zeta^{1/3})
$$
Now let's assume it's the case that $|\langle \bar{w}_i,  \bar{w}_j\rangle |\leq \zeta^{1/3}$.
Let $g_1=g\cdot \bar{w}_1$ and $g_2=g\cdot \bar{w}_2$ as described in the rounding scheme, and $\rho=\langle \bar{w}_i,  \bar{w}_j\rangle$. Hence $g_1$ and $g_2$ are two jointly distributed
standard Gaussian variables with covariance matrix  $\Sigma=
\begin{pmatrix}
1\ \ \ \rho \\ \rho\ \ \ 1\\
\end{pmatrix}$.

The mutual information of $g_1$ and $g_2$ is
$$
I(g_1,g_2)=-\frac{1}{2}\log(\det\Sigma)\leq O(-\log(1-\zeta^{2/3}))\leq O(\zeta^{1/3})
$$

Notice that $y_i$ is fully dependent on $g_i$, therefore by the data
processing inequality (\pref{thm:dataprocessing}), we have $I(y_1,y_2)\leq I(g_1,g_2)\leq O(\zeta^{1/3})$
\end{proof}

\begin{theorem}
Given an $\alpha$-independent solution to 2-rounds Lasserre's SDP hierarchy. Let $\{y_i\}$ be the rounded solution after applying \pref{alg:rounding}. Define $S=\E_{i\sim W} y_i$, then

$$
\mbox{Var}(S)\leq O(\alpha^{1/12})
$$
\end{theorem}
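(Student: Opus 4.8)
The plan is to reduce $\mbox{Var}(S)$ to an average pairwise covariance of the rounded bits and then chain together the three estimates already in hand: \pref{fact:statdist}, \pref{lem:lowmi}, and \pref{lem:ipbound}. First I would note that $S=\E_{i\sim W}y_i$ is a random variable in the rounding randomness (the Gaussian $g$ of \pref{alg:rounding}), and that $\Paren{\E_{i\sim W}y_i}^2=\E_{i,j\sim W}[y_iy_j]$; since the expectations over $i,j$ and over $g$ commute, this gives
$$\mbox{Var}(S)=\E\Brac{\Paren{\E_{i\sim W}y_i}^2}-\Paren{\E\,\E_{i\sim W}y_i}^2=\E_{i,j\sim W}\Brac{\mbox{Cov}(y_i,y_j)}\mcom$$
where the covariance is over $g$ for the fixed pair $i,j$. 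So it suffices to control $\mbox{Cov}(y_i,y_j)$ for a typical pair, including the diagonal $i=j$ (which is handled automatically by the first case in the proof of \pref{lem:lowmi}).

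Next I would compose the three bounds for a fixed pair $i,j$. By \pref{fact:statdist}, $\mbox{Cov}(y_i,y_j)\le O\Paren{\sqrt{I(y_i;y_j)}}$. By \pref{lem:lowmi} applied with $\zeta=\abs{\iprod{w_i,w_j}}$, we have $I(y_i;y_j)\le O\Paren{\abs{\iprod{w_i,w_j}}^{1/3}}$. Finally \pref{lem:ipbound} gives $\abs{\iprod{w_i,w_j}}\le 4\sqrt{2\,I(x_i;x_j)}$. Substituting back through the chain, the exponents compound as $\tfrac12\cdot\tfrac13\cdot\tfrac12=\tfrac1{12}$, yielding an absolute constant $C$ with
$$\mbox{Cov}(y_i,y_j)\le C\cdot I(x_i;x_j)^{1/12}\qquad\text{for every pair }i,j\mper$$

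To conclude I would average over $i,j\sim W$ and invoke concavity of $t\mapsto t^{1/12}$ on $\Rnn$: by Jensen's inequality,
$$\mbox{Var}(S)=\E_{i,j\sim W}\Brac{\mbox{Cov}(y_i,y_j)}\le C\,\E_{i,j\sim W}\Brac{I(x_i;x_j)^{1/12}}\le C\Paren{\E_{i,j\sim W}\Brac{I(x_i;x_j)}}^{1/12}\mper$$
Since the SDP solution is $\alpha$-independent, \pref{def:alphaindependentsol} bounds $\E_{i,j\sim W}[I(x_i;x_j)]\le\alpha$, so $\mbox{Var}(S)\le C\alpha^{1/12}=O(\alpha^{1/12})$, as claimed. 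The only steps needing care are cosmetic: verifying that the $O(\cdot)$ constants in \pref{fact:statdist} and \pref{lem:lowmi} are absolute (so that pulling out a single constant $C$ before the average is legitimate), and bookkeeping the exponent $1/12$ through the composition. I do not expect a genuine obstacle here, since all the analytic work has already been done in \pref{lem:lowmi} and \pref{lem:ipbound}; the suboptimal exponent $1/12$ is an artifact of routing everything through mutual information rather than a shortcoming of the argument.
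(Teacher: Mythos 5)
Your proposal is correct and follows essentially the same route as the paper: reduce $\mathrm{Var}(S)$ to $\E_{i,j\sim W}[\mathrm{Cov}(y_i,y_j)]$, then chain \pref{fact:statdist}, \pref{lem:lowmi}, and \pref{lem:ipbound} to get $\mathrm{Cov}(y_i,y_j)\le O(I(x_i;x_j)^{1/12})$, and finish by concavity of $t^{1/12}$ together with $\alpha$-independence. Your extra remarks on the diagonal terms and on the commuting of expectations are just more careful bookkeeping of steps the paper leaves implicit.
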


\begin{proof}
\begin{align*}
\mbox{Var}(S) &= \E_{i,j\sim W}[\mbox{Cov}(y_i,y_j)] \\
&\leq \E_{i,j\sim W}[O(\sqrt{I(y_i;y_j)})]  \ \ \ \ \ \text{(by
\pref{fact:statdist})}\\
&\leq \E_{i,j\sim W}[O(\sqrt{|w_i,w_j|^{1/3}})] \ \ \ \ \text{(by \pref{lem:lowmi})}  \ \\
&\leq \E_{i,j\sim W}[O(\sqrt{I(x_i;x_j)^{1/6}})] \ \ \ \ \text{(by \pref{lem:ipbound})}  \ \\
&\leq O((\E_{i,j\sim W}[I(x_i;x_j)])^{1/12})  \ \ \ \ \ \text{(by concavity of the function $x^{1/12}$)} \\
&\leq O(\alpha^{1/12})
\end{align*}
\end{proof}

\begin{corollary} \label{cor:balance}
Given an $\alpha$-independent solution to 2-rounds Lasserre's SDP hierarchy $v_i=\mu_i+w_i$. The rounding algorithm will find an $O(\alpha^{1/24})$-balanced (that is, the balance of the cut differs from the expected value by at most $O(\alpha^{1/24})$ fraction of the total weights) with probability at least $1-O(\alpha^{1/24})$.
\end{corollary}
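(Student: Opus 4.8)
The plan is to combine the variance bound from the preceding theorem with Chebyshev's inequality. We have already established that for an $\alpha$-independent solution, the random variable $S = \E_{i \sim W} y_i$ — which measures the (signed) balance of the rounded cut — satisfies $\Var(S) \leq O(\alpha^{1/12})$. The balance of the cut deviates from its expectation $\E[S]$ by exactly $|S - \E[S]|$, so I would simply apply Chebyshev: for any $\lambda > 0$,
$$
\Pr\bigl[\,|S - \E[S]| \geq \lambda\,\bigr] \leq \frac{\Var(S)}{\lambda^2} \leq \frac{O(\alpha^{1/12})}{\lambda^2}\mper
$$

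The second step is to choose the threshold $\lambda$ to balance the deviation against the failure probability. Setting $\lambda = \alpha^{1/24}$ makes the right-hand side $O(\alpha^{1/12})/\alpha^{1/12} = O(1)$, which is not yet useful, so instead I would pick $\lambda$ slightly larger — say $\lambda = \alpha^{1/24}$ but tracking constants, or more cleanly take $\lambda$ to be a large constant multiple of $\alpha^{1/24}$; a cleaner route is to note $\Var(S) \le O(\alpha^{1/12})$ and set $\lambda = \Var(S)^{1/3}$-type split. Concretely, writing $\beta = O(\alpha^{1/12})$ for the variance bound and choosing $\lambda = \beta^{1/2} \cdot \beta^{-1/6} = \beta^{1/3}$ gives $\Pr[|S-\E[S]| \ge \beta^{1/3}] \le \beta/\beta^{2/3} = \beta^{1/3} = O(\alpha^{1/24})$. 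Thus with probability at least $1 - O(\alpha^{1/24})$ the balance is within $O(\alpha^{1/24})$ of $\E[S]$.

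Finally I would recall that the rounding scheme (Algorithm \ref{alg:rounding}) was designed so that $\E[y_i] = \mu_i$ for every $i$, hence $\E[S] = \E_{i \sim W}\mu_i$ equals the intended balance dictated by the SDP solution, which in turn satisfies the global cardinality constraint. Therefore "the balance of the cut differs from the expected value by at most $O(\alpha^{1/24})$ fraction of the total weights" is precisely the event bounded above, and the corollary follows. The only mild subtlety — hardly an obstacle — is bookkeeping the $O(\cdot)$ constants through the choice of $\lambda$ so that the same exponent $1/24$ appears both in the deviation and in the failure probability; this is just the observation that optimizing $\lambda$ in Chebyshev against a bound $\beta$ naturally yields both quantities of order $\beta^{1/3}$.
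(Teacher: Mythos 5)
Your approach is exactly the intended one: the paper states this as an immediate corollary of the variance bound $\Var(S)\le O(\alpha^{1/12})$, with Chebyshev's inequality left implicit, and your observation that $\E[S]$ equals the prescribed balance because the rounding preserves each bias individually is the right supporting remark. The one real issue is the exponent arithmetic at the end. Your own computation is internally consistent --- the balanced Chebyshev split of a variance bound $\beta$ indeed puts $\beta^{1/3}$ in both the deviation and the failure probability --- but with $\beta = O(\alpha^{1/12})$ this gives $\beta^{1/3} = O(\alpha^{1/36})$, not $O(\alpha^{1/24})$. In fact no choice of threshold recovers the corollary exactly as written: to get failure probability $O(\alpha^{1/24})$ from $\Var(S)\le O(\alpha^{1/12})$ you must take $\lambda = \Omega(\alpha^{1/48})$, and taking $\lambda = O(\alpha^{1/24})$ only yields an $O(1)$ bound on the failure probability. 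So Chebyshev gives either ``$O(\alpha^{1/36})$-balanced with probability $1-O(\alpha^{1/36})$'' or ``$O(\alpha^{1/48})$-balanced with probability $1-O(\alpha^{1/24})$,'' and the $1/24$--$1/24$ pairing in the statement appears to be a slip in the paper that you have inadvertently reproduced by miscomputing $(\alpha^{1/12})^{1/3}$.

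This is cosmetic rather than structural: downstream the corollary is only used after setting $\alpha$ to a suitable polynomial in $\delta$ (the paper takes $\alpha=\delta^{24}$; with the corrected exponent one would take $\alpha=\delta^{36}$), so nothing in the main theorems changes. But in a final write-up you should either state the corollary with exponent $1/36$ in both slots, or keep $1/24$ for the probability and weaken the deviation to $\alpha^{1/48}$, and in either case make the Chebyshev step and the choice of $\lambda$ explicit rather than gesturing at ``tracking constants,'' since the constants are not the issue --- the exponent is.
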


\ifnum\full=0 \vspace{-8pt}\fi
\subsection {Wrapping Up}
\ifnum\full=0 \vspace{-8pt}\fi
Here we present the proofs of the main theorems of this work.
\paragraph{Proof of \pref{thm:minbisection}}
Suppose we're given a \minbisection instance $G=(V,E)$ with value at
most $\epsilon$ and constant $\delta>0$. By setting
$\alpha=\delta^{24}$ and applying \pref{thm:alphaindependentsol}, we
will get an $\alpha$-independent solution with value at most
$\epsilon+\alpha$. By \pref{lem:rooteps} and the concavity of the
function $\sqrt{x}$, the expected size of the cut returned by
\pref{alg:rounding} is at most
$O(\sqrt{\eps+\alpha})=O(\sqrt{\eps}+\sqrt{\alpha})$. Therefore, with
constant probability (say 1/2), the cut returned by the rounding
algorithm has size at most $O(\sqrt{\eps}+\sqrt{\alpha})$. Also, by
\pref{cor:balance}, the cut will be $O(\delta)$-balanced with
probability at least $1-O(\delta)$. Therefore, by union bound, the
algorithm will return an $O(\delta)$-balanced cut with value at most
$O(\sqrt{\eps}+\sqrt{\alpha})$ with constant probability. Notice that
this probability can be amplified to $1-\eps$ by running the algorithm
$O(\log(1/\eps))$ times. Given such a cut, we can simply move $O(\delta)$ fraction of the vertices with least degree from the larger side to the smaller side to get an exact bisection -- this process will increase the value of the cut by at most $O(\delta)$. Therefore, in this case, we get a bisection of value at most $O(\sqrt{\eps}+\sqrt{\alpha}+\delta)=O(\sqrt{\eps}+\delta)$. Hence, the expected value of the bisection returned by the rounding algorithm is at most $(1-\eps)O(\sqrt{\eps}+\delta)+\eps=O(\sqrt{\eps}+\delta)$.
\paragraph{Proof of \pref{thm:max-bisection}}
The proof is similar in the case of \maxbisection. The only difference is that we have to use the fact that the rounding scheme is balanced, \ie  $\Pr(F(v)\neq F(-v))=1$. Hence, by \pref{lem:rooteps}, for any edge $(u,v)$ with value $1-\epsilon$ in the SDP solution, the algorithm separates them with probability at least $1-O(\sqrt{\eps})$. The rest of the proof is identical.

Using a computer-assisted proof, we can show that the approximation
ratio of this algorithm for \maxbisection is between $0.85$ and $0.86$.
Thus further narrowing down the gap between approximation and inapproximability of \maxbisection.
Using the same algorithm, we obtain a $0.92$-approximation for globally constrained \maxtwosat. It is known that under the Unique Games Conjecture, \maxtwosat is NP-Hard to approximate within $0.9401$.

\ifnum\full=1

\section{Dictatorship Tests from Globally Uncorrelated SDP Solutions}

\label{sec:gaptodict}

A dictatorship test $\dict$ for the \maxbisection problem
	consists of a graph on the set of vertices
	$\sbits^{R}$.  By convention, the graph $\dict$
	is a weighted graph where the edge weights form a
	probability distribution (sum up to $1$).  We will
	write $(\mrv{z},\mrv{z}') \in \dict$ to denote an edge
	sampled from the graph $\dict$ (here $\mrv{z},\mrv{z}'
	\in \sbits^{R}$).

A cut of the $\dict$ graph can be thought of as a boolean function
	$\cF : \sbits^R \to \sbits$.  The value	of a cut $\cF$ given by
		$$ \dict(\cF) = \frac{1}{2}\E_{ (\mrv{z}, \mrv{z}')\in
\dict} \Big[
		1 - \cF(\mrv{z}) \cF(\mrv{z}') \Big] \mcom$$
	is the probability that $\mrv{z}$,$\mrv{z}'$ are on
different sides of the cut.
		It is also useful to define $\dict(\cF)$ for non-boolean
	functions $\cF: \sbits^R \to [-1,1]$ that take values
	in the interval $[-1,1]$.  To this end, we will interpret a
	value $\cF(\mrv{z}) \in [-1,1]$ as a random variable that
	takes $\sbits$ values.  Specifically, we think of a number $a
	\in [-1,1]$ as the following random variable
	\begin{align} \label{eq:cutrounding}
	a = \begin{cases} -1 & \text{ with probability }
		\frac{1-a}{2} \\
		1 & \text{ with probability } \frac{1+a}{2}
		\end{cases}
	\end{align}
	With this interpretation, the natural definition of
	$\dict(\cF)$ for such a function is as follows:
	$$  \dict(\cF) = \frac{1}{2}\E_{ (\mrv{z}, \mrv{z}') \in
	\dict} \Big[
		1 - \cF(\mrv{z}) \cF(\mrv{z}') \Big] \mper $$
	Indeed, the above expression is equal to the expected value of the
	cut obtained by randomly rounding the values of the function
	$\cF : \sbits^{R} \to [-1,1]$ to $\sbits$ as
	described in Equation \eqref{eq:cutrounding}.

	We will construct a dictatorship test for the weighted version
	of \maxbisection.  In particular, each vertex $x \in \sbits^R$ of \dict is
	associated a weight $W(x)$, and the weights $W$ form a
	probability distribution over $\sbits^R$ (sum up to $1$).  The
	balance condition on the cut can now be expressed as
	$\E_{\mrv{z}
	\sim W}[\cF(\mrv{z})] = 0$.

The dictatorship test $\dict$ can be easily
	transformed in to a dictatorship test $\dict'$ for unweighted
	\maxbisection.  The idea is to replace each vertex $\mrv{x} \in
	\sbits^R$ with a cluster $V_{\mrv{x}}$ of $\lfloor W(\mrv{x}) \cdot M \rfloor$
	vertices for some large integer $M$.  For every edge
	$(\mrv{x},\mrv{y})$ in $\dict$, connect every pair of vertices
	in the corresponding clusters $V_{\mrv{x}},V_{\mrv{y}}$ with
	edge of the same weight.  Given any bisection $\cF' : \dict'
	\to \sbits$ of the graph $\dict'$
	with value $c$, define $\cF(\mrv{z}) = \E_{v \in V_{\mrv{z}}}
	\cF'(v)$.  By slightly correcting the balance of $\cF$, it is
	easy to obtain a bisection $\cF : \sbits^R \to [-1,1]$
	satisfying
	$$ \dict(\cF) \geq c - o_{M}(1)  \qquad \qquad \E_{\mrv{z}}
	\cF(\mrv{z}) = 0 \mper$$
	Conversely, given a bisection $\cF : \sbits^R \to [-1,1]$ of
	$\dict$, assign $(1+\cF(\mrv{z}))/2$ fraction of vertices of
	$V_{\mrv{z}}$ to be $1$ and the rest to $-1$.  The resulting
	partition of $\dict'$ is very close to balanced (up to
	rounding errors), and can be modified in to a bisection with
	value $\dict(\cF) - o_{M}(1)$.

	The {\it dictator cuts} are given by the functions
	$\cF(\mrv{z}) = z^{(\ell)}$ for some $\ell \in [R]$.
	The dictatorship test graph is so constructed that each
	dictator cut will yield a bisection and the $\mathsf{Completeness}$ of the test $\dict$ is the minimum value of a dictator cut, i.e.,
	$$ \mathsf{Completeness}(\dict) = \min_{\ell \in [R]}
	\dict(z^{(\ell)}) $$
	The soundness of the dictatorship test is the value of
	bisections of $\dict$ that are {\it far from every dictator}.  We will formalize the notion of being
	{\it far from every dictator} using the notion of influences.

\paragraph{Influences and Noise Operators}

	To this end, we recall the definitions of influences and noise
	operators.  Let $\Omega = (\sbits,\mu)$ denote the probability space with atoms
$\sbits$ and a distribution $\mu$ on them.  Then, the influences and
noise operators for functions over the product space $\Omega^R$ are
defined as follows.
	\begin{definition}[Influences]
	The {\it influence} of the $\ell$\th coordinate on a function
	$\orf{F}: \sbits^R \to \R$ under a distribution $\mu$
	over $\sbits$ is given by
	$\Inf^{\mu}_\ell(\orf{F})
	=\E_{\mrv{x}^{(-\ell)}}\big[\Var_{x^{(\ell)}}[\orf{F}(\mrv{x})]\big]
	= \sum_{S \ni \ell} \hat{\orf{F}}_S^2$.
	\end{definition}
	
\begin{definition}
  For $0\leq \eps \leq 1$, define the operator $\T_{1-\eps}$ on
$L_{2}(\Omega^{R})$ as,
$$ \T_{1-\eps} \orf{F} (\mrv{z})  = \E[\orf{F}(\tilde{\mrv{z}})\mid
\mrv{z}] $$
where each coordinate $\tilde{z}^{(i)}$ of $\tilde{\mrv{z}}$ is  equal
to $z^{(i)}$
\textrm{ with probability $1-\eps$} and a random element from
$\Omega$ with probability $\eps$.
\end{definition}

\paragraph{Invariance Principle}

The following invariance principle is an immediate consequence of
Theorem $3.6$ in the work of Isaksson and Mossel \cite{IsakssonM09}.

\begin{theorem}(Invariance Principle \cite{IsakssonM09}) \label{thm:invariance}
  Let $\Omega$ be a finite probability space with the least non-zero
  probability of an atom at least $\alpha \leq 1/2$.  Let $\mcl{L} =
  \{\ell_1,\ell_2\}$ be an ensemble of random variables
  over $\Omega$.  Let $\erv{G} =
  \{g_1,g_2\}$ be an ensemble of Gaussian random variables satisfying the following conditions:
  \begin{align*}
    \E[\ell_i] = \E[g_i]  & & \E[\ell_i^2] = \E[g_i^2] & & \E[\ell_i
    \ell_j] = \E[g_i g_j] & & \forall i,j \in \{1,2\}
  \end{align*}
  Let $K = \log (1/\alpha)$.  Let $\opl{F}$ denote a
  multilinear polynomial  and let $\opl{H} = (T_{1-\epsilon}
  \opl{F})$.  Let the variance of $\opl{H}$, $\Var[\opl{H}]$ be
  bounded by $1$ and all the influences are smaller than $\tau$, i.e.,  $\Inf_i(\opl{H})
  \leq \tau$ for all $i$.

If $\Psi : \R^2 \rightarrow \R$ is a Lipschitz-continous function with
Lipschitz constant $C_0$ (with respect to the $L_2$ norm)  then
    $$ \Big|\E\Big[\Psi(\opl{H}(\mcl{L}^{R}))\Big] -
    \E\Big[\Psi(\opl{H}(\erv{G}^R))\Big] \Big| \leq
    C \cdot C_0 \cdot \tau^{\epsilon/18K} = o_{\tau}(1) $$
    for some constant $C$.
\end{theorem}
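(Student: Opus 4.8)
The plan is to obtain \pref{thm:invariance} as a specialization of the multidimensional invariance principle of Isaksson and Mossel (Theorem~3.6 of \cite{IsakssonM09}) to the two-component ensemble $\mcl{L}=\{\ell_1,\ell_2\}$, followed by a routine translation of hypotheses and error terms into the form stated here. Their theorem governs a tuple of low-influence multilinear polynomials evaluated on a sequence of independent copies of a finite-probability-space ensemble, compared with the same polynomials evaluated on a Gaussian ensemble whose first and second moments match, and it bounds $\bigl|\E[\Psi(\cdot)] - \E[\Psi(\cdot)]\bigr|$ for test functions $\Psi$. In our situation there is a single multilinear polynomial $\opl{F}$, the polynomial $\opl{H}=\T_{1-\epsilon}\opl{F}$ is evaluated on the two correlated coordinate-sequences drawn from $\mcl{L}^R$ (this is the pair fed to $\Psi:\R^2\to\R$), and the three moment-matching identities $\E[\ell_i]=\E[g_i]$, $\E[\ell_i^2]=\E[g_i^2]$, $\E[\ell_i\ell_j]=\E[g_ig_j]$ are exactly the hypotheses needed to invoke Theorem~3.6.

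The first step is to verify the structural hypotheses of Theorem~3.6 for $\opl{H}$. Bounded variance $\Var[\opl{H}]\le 1$ and the influence bound $\Inf_i(\opl{H})\le\tau$ are assumed outright. The purpose of the noise operator $\T_{1-\epsilon}$ is to supply the effective low-degree property that the smooth invariance argument requires: since $\widehat{\opl{H}}_S=(1-\epsilon)^{|S|}\widehat{\opl{F}}_S$, the Fourier mass of $\opl{H}$ above level $k$ decays like $(1-\epsilon)^{2k}$, so for $k=\Theta(\log(1/\tau)/\epsilon)$ the high-degree part contributes at most $\poly(\tau)$ to every quantity entering the bound (variance, and the per-coordinate fourth moments that appear after hypercontractive smoothing). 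This is precisely the mechanism by which the noise rate $\epsilon$ and the hypercontractivity constant $K=\log(1/\alpha)$ (with $\alpha$ the least atom probability) enter the final exponent; carrying the constants through the coordinate-by-coordinate hybrid and third-order Taylor expansion underlying \cite{IsakssonM09} produces a bound of the shape $C\cdot C_0\cdot\tau^{\epsilon/18K}$ for a $C_0$-Lipschitz $\Psi$.

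If Theorem~3.6 is used in its form for thrice-differentiable $\Psi$ with controlled third derivatives, the remaining step is the standard mollification: convolve $\Psi$ with a Gaussian kernel of width $\lambda$ to get $\Psi_\lambda$ with $\|\Psi-\Psi_\lambda\|_\infty\le C_0\lambda$ and third derivatives $O(C_0/\lambda^2)$, apply the smooth principle to $\Psi_\lambda$, pay $2C_0\lambda$ on each side for the swap $\Psi\leftrightarrow\Psi_\lambda$, and choose $\lambda$ to balance $C_0\lambda$ against $(C_0/\lambda^2)\,\tau^{\Omega(\epsilon/K)}$; this balancing, together with the degree cutoff $k$, is what pins the constant $18$ in the exponent. (In fact \cite{IsakssonM09} already states a Lipschitz version, so this reduction is immediate.) Finally, because $\epsilon$ and $K$ are fixed while $\tau\to 0$, the bound is $o_\tau(1)$ as asserted.

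I expect the only real obstacle to be bookkeeping rather than any new idea: setting up $\mcl{L}$ and the substitution $\Psi(\opl{H}(\cdot),\opl{H}(\cdot))$ so the hypotheses of Theorem~3.6 are met verbatim, and tracking the degree cutoff, the hypercontractive exponent, and the mollification width carefully enough to land on the stated exponent $\epsilon/18K$. Everything else is contained in \cite{IsakssonM09} and the standard invariance-principle machinery.
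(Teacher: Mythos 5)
Your proposal takes the same route as the paper, which offers no independent proof and simply states the theorem as an immediate consequence of Theorem~3.6 of Isaksson and Mossel \cite{IsakssonM09}. Your additional bookkeeping (the two-variable specialization, the degree cutoff supplied by $\T_{1-\epsilon}$, and the mollification to pass from smooth to Lipschitz test functions) is a reasonable account of why that consequence is indeed immediate.
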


\paragraph{Construction}

	Let $G = (V,E)$ be an arbitrary instance of \maxbisection.  Let
	$\vec{V} = \{v_{i,0},v_{i,1}\}_{i \in V}$ denote a {\it globally
	uncorrelated} feasible SDP solution for two rounds of the
	Lasserre hierarchy.  Specifically, for every pair of vertices
	$i,j \in V$, there exists a distribution $\mu_{ij}$ over $\sbits$ assignments that match the SDP inner
products.  In other words, there exists $\sbits$
valued random variables $z_i,z_j$ such that
$$ \iprod{\vec v_i , \vec v_j} = \E[z_i \cdot z_j] \mper$$
Furthermore, the correlation between random pair of vertices is at
most $\delta$, i.e., $$ \E_{i,j \in V}[I(z_i,z_j)] \leq
\delta \mper$$

	Starting from $G = (V,E)$ along with the SDP solution $\vec{V}$ and a parameter $\epsilon$ we
	construct a dictatorship test $\dict_{\vec{V}}^{\eps}$.  The
	dictatorship test gadget is exactly the same as
	the construction by Raghavendra \cite{Raghavendra08} for the \maxcut
	problem.  For the sake of completeness, we include the details
	below.

	\begin{mybox}
$\dict^{\eps}_{\vec{V}}$ (\maxbisection)
\ifnum\full=0
\itemsep=0ex
\fi
The set of vertices of $\dict^{\eps}_{\vec V}$ consists of the
$R$-dimensional hypercube $\sbits^{R}$.  The distribution of edges in
$\dict^{\eps}_{\vec V}$ is the one induced by the following sampling
procedure:
\begin{itemize} \itemsep=0ex
\item Sample an edge $e = (v_i,v_j) \in E$ in the graph $G$.
\item Sample $R$ times independently from the distribution $\mu_{e}$ to
	obtain $\mrv{z}_i^R = (z_i^{(1)},\ldots, z_{i}^{(R)})$ and
	$\mrv{z}^R_j = (z_j^{(1)},\ldots, z_{j}^{(R)})$, both in
	$\sbits^{R}$.
\item  Perturb each coordinate of $\mrv{z}^R_i$ and $\mrv{z}^R_j$
	independently with probability $\epsilon$ to obtain
	$\tilde{\mrv{z}}^R_i,\tilde{\mrv{z}}^R_j$
	respectively.  Formally, for each $\ell \in [R]$,
	$$ \tilde{z}_i^{(\ell)} = \begin{cases} z_{i}^{(\ell)} & \text{
		with probability } 1-\eps \\
		\text{ random sample from distribution } \mu_i  & \text{
		with probability } \eps
	\end{cases} $$
\item Output the edge $(\tilde{\mrv{z}}^{R}_i,
	\tilde{\mrv{z}}^{R}_j)$. 
\end{itemize}
The weights on the vertices of $\dict^{\eps}_{\vec V}$ is given by
$$ W(x) = \E_{i \in V}\left[ \Pr_{\mrv{z} \in \mu_i^R}[\mrv{z} =
x]\right] \mper $$
\end{mybox}

  We will show the following theorem about the completeness and
soundness of the dictatorship test.

\begin{theorem} \label{thm:gaptodict}
There exist absolute constants $C,K$ such that for all $\eps, \tau
\in [0,1]$ there exists $\delta$ such that following holds. Given a graph $G$ and a
$\delta$-independent SDP solution $\vec V = \{\vec
v_{i,0}, \vec v_{i,1}|i \in V\}$  for the two round Lasserre SDP for \maxbisection, the
dictatorship test $\dict_{\vec V}^{\eps}$ is such that
\begin{itemize}
	\item The {\it dictator cuts} are bisections with value within
		$2\epsilon$ of the SDP value, i.e.,
		$\msf{Completeness}(\dict^{\eps}_{\vec V}) \geq \val(\vec V) -
		2\eps$
	\item If $\cF: \sbits^R \to [-1,1]$ is a bisection of
		$\dict_{\vec V}^{\eps}$ ($\E_{x \sim W}[\cF(x)]
		= 0$) and all its influences are at most $\tau$, i.e.,
		$$\Inf_\ell^{\mu_i}(\cF) \leq \tau \qquad\qquad
		\forall i \in V, \ell \in [R]\mcom$$ then,
		$$ \dict_{\vec V}^{\eps}(\cF) \leq \opt(G)+
		C\tau^{K\eps} \mper $$
\end{itemize}
\end{theorem}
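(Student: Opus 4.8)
The plan is to follow the template of Raghavendra~\cite{Raghavendra08}, splitting the argument into a completeness part and a soundness part, and carrying out the ``dictatorship test from an integrality gap'' reduction while paying extra attention to the global balance constraint. For completeness, I would evaluate the test on a dictator cut $\cF(\mrv z) = z^{(\ell)}$. First I would check that $\cF$ is a valid bisection: since each coordinate is sampled (before noise) from $\mu_i$ for a random $i \sim W$, and the SDP solution respects the cardinality constraint $\E_{i \sim W}[\mu_i] = 0$ (vertex $i$ has bias $\mu_i = \iprod{\vec v_i, I}$), the perturbation does not change the expectation, so $\E_{x \sim W}[\cF(x)] = 0$. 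Then the value of the dictator cut on edge $e = (v_i, v_j)$ is exactly $\tfrac12(1 - \E[\tilde z_i^{(\ell)} \tilde z_j^{(\ell)}]) = \tfrac12(1 - (1-\eps)^2 \iprod{\vec v_i, \vec v_j})$, which is within $2\eps$ of the SDP contribution of that edge; averaging over $e \in E$ gives $\msf{Completeness}(\dict^{\eps}_{\vec V}) \geq \val(\vec V) - 2\eps$.

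For soundness, suppose $\cF : \sbits^R \to [-1,1]$ has all influences at most $\tau$ and is balanced. The standard strategy is to use $\cF$ to construct a (randomized, hence on average good) assignment for the original instance $G$ and conclude $\dict^{\eps}_{\vec V}(\cF) \leq \opt(G) + C\tau^{K\eps}$. Concretely, I would apply the invariance principle (Theorem~\ref{thm:invariance}) to replace the $\sbits$-valued ensembles $\{z_i\}$ at each vertex by the matching Gaussian ensembles: because $\T_{1-\eps}\cF$ has small low-degree influences and bounded variance, its ``cut value'' functional $\Psi$ (a suitable Lipschitz rounding of the bilinear form $\tfrac12(1 - \cF(\mrv z)\cF(\mrv z'))$ composed with the rounding~\eqref{eq:cutrounding}) changes by only $o_\tau(1)$ when we pass to Gaussian space. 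In Gaussian space, the value of the test becomes exactly the expected value of a rounding scheme applied to a feasible vector solution for $G$: for each vertex $i$ we obtain a unit vector (built from the Gaussian images of $\cF$ at the coordinates associated to $i$), these vectors have inner products governed by $\iprod{\vec v_i, \vec v_j}$ up to the noise, and the induced Gaussian rounding produces a $\sbits$-assignment whose expected fraction of cut edges is at most $\opt(G)$. The balance condition on $\cF$ translates, via the same invariance step, to the statement that this rounded assignment is (approximately) balanced, so it is a near-bisection of $G$ and hence has value at most $\opt(G)$ up to lower-order terms. Collecting the error terms $o_\tau(1)$ from invariance and the $o_M(1)$/$\eps$ slack gives the bound $C\tau^{K\eps}$.

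The step I expect to be the main obstacle is handling the \emph{global balance constraint} through the invariance principle. Unlike plain \maxcut, here I need the rounded Gaussian assignment to be not just a good cut but an (approximate) bisection of $G$, and this requires controlling $\E_{i \sim W}$ of the bias of the rounded variable at vertex $i$ and showing it stays near $0$. This is where the ``globally uncorrelated'' hypothesis $\E_{i,j}[I(z_i,z_j)] \leq \delta$ enters crucially: just as in Section~\ref{sec:rounding}, low average mutual information bounds the variance of the balance of the rounded assignment by $\poly(\delta)$, so choosing $\delta$ small (as a function of $\eps, \tau$) forces the Gaussian rounding to output a genuinely near-balanced assignment with high probability, which can then be cheaply repaired to an exact bisection. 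A secondary technical point is choosing the Lipschitz functional $\Psi$ and the truncation so that both the cut-value functional and the balance functional are simultaneously handled by one application of Theorem~\ref{thm:invariance}; I would package these as a single vector-valued Lipschitz test function, exactly as in~\cite{Raghavendra08}, and let $\tau \to 0$ drive all the error terms.
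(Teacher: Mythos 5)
Your proposal matches the paper's proof in all essentials: completeness via the direct computation $\E_{x\sim W}[x^{(j)}]=\E_{i}\E_{a\sim\mu_i}[a]=0$ together with the \maxcut analysis of \cite{Raghavendra08}, and soundness by using $\cF$ as a rounding scheme $\round_{\mrf F}$ for the SDP solution of $G$, invoking the invariance principle to show the expected balance of the rounded assignment is $O(\tau^{K\eps})$, and using $\delta$-independence (via the bound $|\iprod{w_i,w_j}|\le O(\sqrt{I(x_i;x_j)})$ and a covariance/Lipschitz-continuity lemma for the rounding) to bound the variance of the balance by $\poly(\delta)$. The only cosmetic difference is that you propose packaging the cut-value and balance functionals into one vector-valued Lipschitz test function, whereas the paper treats the cut value by citing the soundness analysis of \cite{Raghavendra08} and applies invariance separately to the balance; the content is the same.
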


\begin{proof}
The analysis of the dictatorship test is along the lines of the
corresponding proof for \maxcut in \cite{Raghavendra08}.

\paragraph{Completeness}
First, the dictatorship test gadget is exactly the same as that
constructed for \maxcut in \cite{Raghavendra08}.  Therefore from \cite{Raghavendra08}, the
fraction of edges cut by the dictators is at least $\val(\vec V) - 2\epsilon$.
To finish the proof of completeness, we need to show that the dictator
cuts are indeed {\it balanced}.  However, this is an easy calculation
since the balance of the $j^{th}$ dictator cut is given by,
$$ \E_{x \in W} [x^{(j)}] = \E_{i \in V} \E_{x \in \mu_i^R}[x^{(j)}] =
\E_{i \in V} \E_{a \in \mu_i}[a] = 0 \mcom$$
where the last equality uses the fact that the SDP solution satisfies
the balance condition.

\paragraph{Soundness}
Let $\cF : \sbits^R \to [-1,1]$ be a balanced cut all of whose
influences are at most $\tau$.  As in \cite{Raghavendra08}, we will
use the function $\mrf{F}$ to round the SDP solution $\vec V$.  The
rounding algorithm is exactly the same as the one in
\cite{Raghavendra08}.  For the sake of completeness, we reproduce the
rounding scheme below.

\begin{mybox}
      $\round_{\mrf{F}}$ Scheme
      \paragraph{Truncation Function} Let $\struncate : \R \to [-1,1]$ be a Lipschitz-continous
function such that for all $x \in [-1,1]$, $\struncate(x)
= x$.  Let $C_0$ denote the Lipschitz constant of the function
$\struncate$.
%

\paragraph{Bias}  For each vertex $i \in V$, let the bias of vertex
$i$ be $\theta_i = \iprod{\vec v_{i,0}, \vec I}$ and let $\vec w_i
=\vec v_{i,0} -
\iprod{\vec v_{i,0},\vec I} \vec v_{i,0}$ be the component of $\vec v_{i,0}$ orthogonal to the vector
$\vec I$.
\paragraph{Scheme}
	Sample $R$ vectors $\zeta^{(1)},\ldots,\zeta^{(R)}$ with each coordinate
      being i.i.d normal random variable.

      For each $i \in V$ do
      \begin{itemize}	
        \itemsep=0ex
      \item For all $1 \leq  j \leq R $,
        compute the projection $g_{i}^{(j)}$ of the vector
	$\vec w_i$ as follows:
        \begin{align*}
          {g}_{i}^{(j)}  = \theta_i +
          \Big[\iprod{\vec w_i ,
          \zeta^{(j)}}\Big] 	
        \end{align*}
	and 
	  let $\mrv{g}_i =  (g_i^{(1)},\ldots,g_i^{(R)})$

\item   Let $\opl{F}_i$ denote the multilinear polynomial
	corresponding to the function $\mrf{F}$ under the distribution
	$\mu_i^R$ and let $\opl{H}_i = \T_{1-\epsilon} \opl{F}_i$.
	Evaluate $\opl{H}_i$ with $g_{i}^{(j)}$ as
	inputs to obtain $p_i$, i.e., $p_i =
	\opl{H}_i(g_{i}^{(1)},\ldots,g_{i}^{(R)})$.
      \item 	Round $p_i$ to $p_i^{*} \in
        [-1,1]$ by using the Lipschitz-continous truncation function
	$\struncate : \R \to [-1,1]$.
	$$ p_i^* = \struncate(p_i) \mper$$
      \item   Assign the vertex $i$ to be $1$ with probability
	      $(1+p_i^{*})/2$ and $-1$ with the remaining probability.
      \end{itemize}
 \end{mybox}

  Let $\round_{\mrf{F}}(\vec V)$ denote the
  expected value of the cut returned by the rounding scheme
$\round_{\mrf{F}}$ on the SDP solution $\vec V$ for the
\maxbisection instance $G$.

Again, by appealing to the soundness analysis in \cite{Raghavendra08},
we conclude that the fraction of edges cut by the resulting partition
is lower bounded by
$$ \round_{\mrf{F}}(\vec V) \geq \dict_{\vec V}^{\eps}(\cF) -
C'\tau^{K\eps} \mper$$
for an absolute constant $C'$.  To finish the proof, we need to argue that if the SDP solution $\vec
V$ is $\delta$-independent, then the resulting partition is close to
balanced with high probability.

First, note that the expected balance of the cut is given by,
$$ \E_{\zeta} \left[ \E_{i}[ p_i^*] \right] =  \E_{\zeta} \left[
\E_{i}[\struncate(\opl{H}(g_{i})) ] \right] \mper $$

Fix a vertex $i \in V$. By construction, the random variables
$z^{(\ell)}_i \sim \mu_i$ and $g^{(\ell)}_i$ have
matching moments up to order two for each $\ell \in [R]$.  Therefore, by applying the invariance
principle of Isaksson and Mossel \cite{IsakssonM09} with the smooth
function $\struncate$ and the multilinear polynomial $\opl{F}_i$ yields
the following inequality,
$$ \E_{\zeta} \left[
\struncate(\opl{H}_i(g_{i}))  \right] \leq  \E_{\mrv{z}^R_i \in
\mu_i^R} \left[\struncate(\opl{H}_i(\mrv{z}_i^R)) \right] +
C\tau^{K\eps} \mper$$
Since the cut $\mrf{F}$ is balanced we can write,
$$ \E_{i} \E_{\mrv{z}^R_i \in \mu_i^R}
\left[\struncate(\opl{H}_i(\mrv{z}_i^R)) \right] =  \E_{i}
\E_{\mrv{z}_i^R \in \mu_i^R}
\left[\opl{H}_i(\mrv{z}_i^R) \right] = \E_{i} \E_{\mrv{z}_i^R \in \mu_i^R}
\left[\opl{F}_i(\mrv{z}_i^R) \right] = \E_{i} \E_{\mrv{z}_i^R \in
\mu_i^R} \left[\mrf{F}(\mrv{z}_i^R) \right] =
0 \mper$$
In the previous calculation, the first equality uses the fact that $\struncate(x) = x$ for $x \in
[-1,1]$ while the second equality uses the fact that
$\E_{\mrv{z}}[\T_{1-\epsilon} H_i(\mrv{z})] = \E_{\mrv{z}}[F_i(\mrv{z})]$.
Therefore, we get the following bound on the expected value of the
balance of the cut, $ \E_{\zeta} \left[
\struncate(\opl{H}_i(g_{i}))  \right] \leq  C\tau^{K\eps} \mper$

Finally, we will show that the balance of the cut is concentrated
around its expectation.  To this end, we first show the following
continuity of the rounding algorithm.

\begin{lemma} \label{lem:rounding-cont}
	For each $i \in V$ and any vector $\vec w'_i$ satisfying
	$\norm{\vec w'_i}_2 = \norm{\vec w_i}_2$, if $p'_i$ denotes the output
	of the rounding scheme $\round_{\mrf{F}}$ with $\vec w'_i$ instead
	of $\vec w_i$ then,
$$ \| \E_{\zeta}[(p'_i-p^*_i)^2 \| \leq C(R) \norm{\vec w_i - \vec w'_i}^2_2 \mcom$$
for some function of $R$ ($C(R) = 2^{2R}$ suffices).
\end{lemma}
\begin{proof}
	Let $\mrv{g}'_i = (g'^{(1)}_i,\ldots,g'^{(R)}_i)$ denote the projections of the vector $\vec w'_i$ along
the directions $\zeta^{(1)},\zeta^{(2)},\ldots,\zeta^{(R)}$.  The
output of the rounding scheme on $\vec w'_i$ is given by
$p'_i = \struncate(\opl{H}_i(\mrv{g}'_i))$.  Recall that the output of the rounding scheme is given by $p^*_i =
\struncate(\opl{H}_i(\mrv{g}_i))$.

The result is a consequence of the fact that the function
$\struncate\circ \opl{H}_i$ is Lipschitz continous.  Since the
variance of $\mrf{F}(\mrv{z}_i^R)$ is at most $1$, the sum of squares
of coefficients of $\opl{H}_i$ is at most $1$.  Therefore, all the
$2^R$ coefficients of $\opl{H}_i$ are bounded by $1$ in absolute value.

The proof is a simple hybrid argument, where we replace $g^{(\ell)}_i$ by $g'^{(\ell)}_i$ one by
one.  The details of the proof are deferred to the full version.
\end{proof}

\begin{lemma} \label{lem:covariance}
For every $i,j$,
$$ | \E_{\zeta}[p_i^{*}p_j^*] - \E_{\zeta}[p_i^{*}]
\E_{\zeta}[p_j^{*}] | \leq C(R) |\iprod{w_i,w_j}|$$
for some function $C(R)$ of $R$ ($C(R) = 1002^{2R}$ suffices).
\end{lemma}
\begin{proof}
Set $\vec w'_j = \vec w_j - \iprod{\vec w_i,\vec w_j} \frac{\vec
w_i}{\norm{\vec w_i}} +
\iprod{\vec w_i,\vec w_j}\bar{u}$ for a unit vector $\bar{u}$
orthogonal to $\vec w_i$ and $\vec w_j$.  Note that $\vec w'_j$ is orthogonal to $\vec w_i$ and satisfies
$\norm{\vec w_j -
\vec w'_j} \leq 4|\iprod{\vec w_i,\vec w_j}|$.
Let $p'_j$ denote the output of the rounding with $\vec w'_j$ instead of
$\vec w_j$. Since $\vec w'_j$ is orthogonal to $\vec w_i$ all their projections are
independent random variables, which implies that,
$$ \E_{\zeta} [p'_j p^{*}_i] = \E_{\zeta}[p'_j]\E_{\zeta}[p^*_i]
\mper$$.
Moreover, by \pref{lem:rounding-cont} we have,
$$ \E_{\zeta} [(p'_j - p^{*}_j)^2] \leq C(R) \norm{\vec w_j - \vec w'_j}_2^2
\leq C(R) \cdot 16|\iprod{\vec w_i,\vec w_j}|^2 \mper $$.
Combining these inequalities and using Cauchy-Schwartz, we finish the proof as follows,
\begin{align*}
 | \E_{\zeta}[p_i^{*}p_j^*] - \E_{\zeta}[p_i^{*}] \E_{\zeta}[p_j^{*}] |
   & \leq  | \E_{\zeta}[p_i^{*}(p_j^*-p'_j)]|  +| \E_{\zeta}[p_i^{*}]\E_{\zeta}[p'_j - p_j^{*}] |\\
   & \leq  2 \left(\E_{\zeta}[(p'_j -
   p^{*}_j)^2]\right)^{\frac{1}{2}}\left(\E[(p^*_i)^2]\right)^{\frac{1}{2}} \\
& \leq 8C(R) |\iprod{w_i,w_j}|
\end{align*}
%
\end{proof}

To finish the proof, now we bound the variance of the balance of the
cut returned using \pref{lem:covariance}. The variance
of the balance of the cut returned is given by,
$$ \E_{\zeta} (\E_{i}[p^*_i])^2 -(\E_{\zeta} \E_{i}[p^*_i])^2 =
\E_{i,j} \left[\E_{\zeta}[p^*_ip^*_j] -
\E_{\zeta}[p^*_i]\E_{\zeta}[p^*_j] \right] \leq C(R)
\E_{i,j}[|\iprod{w_i,w_j}|] $$
For a $\delta$-independent SDP solution, the above quantity is at most
$C(R) \poly(\delta)$.  This gives the desired result.
\end{proof}

\fi

\ifnum\full=0
\clearpage
\bibliographystyle{abbrv}
\bibliography{refs-groth}
\vspace{2ex}
\Large{{\bf APPENDIX:} {\sc Full version of the paper follows.}}
\includepdf[pages=-]{max-bisection-full.pdf}
\end{document}
\fi

\clearpage
\addreferencesection
\bibliographystyle{amsalpha}
\bibliography{papers}
\clearpage

\ifnum\full=1
\appendix

\section{Analysis of Cut Value} \label{app:cutvalue}

We analyze the rounding algorithm in an indirect way -- first we show that under certain conditions, \pref{alg:rounding} returns a better cut compared to Goemans-Williamson algorithm (in expectation). Then we use an union-bound type argument to give the proof for general cases.

First, we present a bound on the tail of the standard gaussian
distribution.
\begin{lemma} \label{lem:normtail}
For $t\geq 0$,
$$
\Phi^{c}(t)=1-\Phi(t) \leq \frac{\sqrt{2/\pi}e^{-t^2/2}}{t+\sqrt{t^2+8/\pi}}
$$
\end{lemma}

\begin{proof}

We apply the following bound on the error function given in \cite{Komatsu55}
$$
e^{x^2}\int_{x}^{\infty}e^{-y^2}dy \leq \frac{1}{x+\sqrt{x^2+4/\pi}}
$$
by replacing $x$ with $\frac{\sqrt{2}t}{2}$, we get the desired bound.

\end{proof}

From now on, let $\mu_0=\sqrt{1-4/\pi^2}\approx 0.7712$ and $t_0=\Phi^{-1}(\mu_0/2+1/2)\approx 1.2034$.

\begin{lemma} \label{lem:monotonicity}
Let $g(t)=e^{t^2/2}(1-\mu^2(t))$, where $\mu(t)=2\Phi(t)-1$. $g(t)$ is decreasing when $t\geq t_0$.
\end{lemma}

\begin{proof}
By simple calculation, we get
$$
g'(t)=4\left(te^{t^2/2}(1-\Phi(t))\Phi(t)+\frac{1}{\sqrt{2\pi}}(1-2\Phi(t))\right)
$$
we want to show
$$
te^{t^2/2}(1-\Phi(t))\Phi(t)+\frac{1}{\sqrt{2\pi}}(1-2\Phi(t))<0
$$
by applying \pref{lem:normtail}, we only need to show
$$
te^{t^2/2}\frac{\sqrt{\frac{2}{\pi}}e^{-t^2/2}}{t+\sqrt{t^2+8/\pi}}\Phi(t)+\frac{1}{\sqrt{2\pi}}(1-2\Phi(t))<0
$$
by simplification, we get
$$
2\Phi(t)-1>\frac{t}{\sqrt{t^2+8/\pi}}
$$
By applying the lemma again and further simplification, we get
$$
e^{t^2}-t^2>\frac{8}{\pi}
$$

This can easily be verified for $t=t_0$. Also LHS is increasing when $t\geq t_0$, therefore the lemma follows.

\end{proof}
%

\begin{lemma} \label{lem:onesolsuffices}
Let $f_1(x)$ and $f_2(x)$ be twice differentiable decreasing functions defined on $[0,\infty)$ satisfying the following conditions
\begin{enumerate}
\item
$f_1(0)=f_2(0)$
\item
$\lim_{x\to \infty} f_1(x) = \lim_{x\to \infty} f_2(x)$
\item
$\lim_{x\to 0} \frac{f_1'(x)}{f_2'(x)}>1$
\item
$\frac{f_1'(x)}{f_2'(x)}=1$ has only one solution
\end{enumerate}
then
$$
f_1(x)\leq f_2(x), \ \ \ \ \forall x\geq 0
$$
\end{lemma}

\begin{proof}
For the sake of contradiction we assume there exists $x_0$ such that $f_1(x_0)>f_2(x_0)$. By the mean value theorem, there exists $x_1<x_0$ such that $f_1'(x_1)>f_2'(x_1)$, which means $\frac{f_1'(x_1)}{f_2'(x_1)}< 1$ (since both $f_1'$ and $f_2'$ are negative). By the fourth assumption, for any $x>x_0>x_1$, $f_1'(x)>f_2'(x)$, therefore $f_1(x)-f_2(x)\geq f_1(x_0)-f_2(x_0)>0$, contradicting the second assumption.
\end{proof}

Now we show the key lemma in this section.

\begin{lemma} \label{lem:brvshs}
Let $u=\mu I+w_1$ and $v=\mu I+w_2$ be two unit vectors with the same
projection on the direction of $I$. Also we assume that $\langle
\bar{w}_1,\bar{w}_2 \rangle=1-\rho\geq 0$, where $\bar{w}_1$ and
$\bar{w}_2$ are the normalized vectors of $w_1$ and $w_2$. Then the
probability that these two vectors are separated by a random
hyperplane is at least the probability that these two vectors are cut by \pref{alg:rounding}.
\end{lemma}
\begin{proof}
First notice that since $u$ and $v$ have the same bias $\mu$, they
will be assigned the same threshold $t=\Phi^{-1}(2\mu-1)$ in \pref{alg:rounding}.

Henceforth, we fix  $\langle \bar{w_1},\bar{w_2} \rangle=1-\rho\geq 0$, and express the probabilities as a function of $\mu$ and  $t$. We stress that $\mu$ and $t$ are fully dependent on each other, therefore the functions are only single variable functions. We use both $\mu$ and $t$ (and other notations that are about to be introduced) in the expression only for simplicity.

Let $\eps=(1-\mu^2)\rho$, which characterizes $ \langle u,v\rangle$ as a function of $\mu$, \ie
$$
\iprod{u,v}=\iprod{\mu I+\sqrt{1-\mu^2}\bar{w}_1),(\mu
I+\sqrt{1-\mu^2}\bar{w}_2)} = 1-\eps
$$
Let $H(t)$ be the probability of the two vectors being separated by a
random hyperplane. It is well-known that \cite{GoemansW95}
$$
H(t)=\arccos(u\cdot v)/\pi=\arccos(1-\epsilon)/\pi
$$
For \pref{alg:rounding}, notice that $\bar{w}_1\cdot g$ and $\bar{w}_2\cdot g$ are two jointly distributed standard Gaussian variables with covariance matrix  $\Sigma=
\begin{pmatrix}
1\ \ \ \ \ 1-\rho \\ 1-\rho\ \ \ \ \ 1\\
\end{pmatrix}
$.
Thus the probability of $u$ and $v$ being separated by \pref{alg:rounding} is
$$
B(t)=2\int_{-\infty}^{t}\int_{t}^{\infty}\frac{1}{2\pi|\Sigma|^{1/2}}e^{-(x_1\ x_2)\Sigma^{-1}(x_1\ x_2)^{T}}dx_1dx_2
$$
It's easy to see that when $\mu=t=0$, these two rounding schemes are equivalent, thus $B(0)=H(0)$. Also $\lim_{t\to\infty} B(t)=\lim_{t\to\infty} H(t)=0$.
The derivatives of $H(t)$ and $B(t)$ are as follows:
$$
H'(t)=-\frac{2\sqrt{2}\rho}{\pi^{3/2}\sqrt{2\eps-\eps^2}}\tilde{\Phi}(t)e^{-t^2/2}
$$
and
$$
B'(t)=-\sqrt{\frac{2}{\pi}}\tilde{\Phi}(at)e^{-t^2/2}
$$

where $a=\frac{\rho}{\sqrt{2\rho-\rho^2}}\leq 1$ when $\rho\leq 1$, and $\tilde{\Phi}(t)$ is defined as
$$
\tilde{\Phi}(t)=\Phi(t)-\Phi(-t)
$$

Let $f(t)=\frac{B'(t)}{H'(t)}$. Notice that $f(0)=\pi/2\geq 1$, thus by
\pref{lem:onesolsuffices}, we only have to show that $f(t)=1$ has only one solution. Moreover, it suffices to show that $f'(t)<0$ when $f(t)\leq 1$.

Notice that when $f(t)\leq 1$, we have
\begin{align*}
\frac{\sqrt{2\epsilon-\epsilon^2}}{\sqrt{2\rho-\rho^2}}\frac{\tilde{\Phi}(at)}{a\tilde{\Phi}(t)}& \leq \frac{2}{\pi} \\
\Rightarrow \frac{2\epsilon-\epsilon^2}{2\rho-\rho^2}& \leq \frac{4}{\pi^2} \ \ \ \ \ \ \text{(By convexity of $\tilde{\Phi}$, $\frac{\tilde{\Phi}(at)}{a\tilde{\Phi}(t)}\geq 1$ when $a\leq 1$)}\\
\Rightarrow \frac{\epsilon}{\rho}\frac{2-\epsilon}{2-\rho}& \leq \frac{4}{\pi^2} \\
\Rightarrow  (1-\mu^2)\frac{2-\rho}{2-\epsilon}& \leq \frac{4}{\pi^2}
\ \ \ \ \ \ \ \left(\frac{\epsilon}{\rho}=1-\mu^2\right) \\
\Rightarrow \mu \geq \sqrt{1-4/\pi^2}&=\mu_0 \ \ \ \ \ \ \ \
\left(\frac{2-\rho}{2-\eps}\leq 1\right) \\
\Rightarrow t&\geq t_0
\end{align*}
By calculation, one can show that
$$
f'(t)=\frac{\sqrt{2/\pi}e^{-t^2/2}\sqrt{2\epsilon-\epsilon^2}}{\tilde{\Phi}(t)}\left(\frac{1-\epsilon}{2\epsilon-\epsilon^2}(-2\mu\rho)\tilde{\Phi}(at)+e^{(1-a^2)t^2/2}a-\frac{\tilde{\Phi}(at)}{\tilde{\Phi}(t)}\right)
$$

Now we show $f'(t)<0$ when $t\geq t_0$. In order to show this, one only needs to show that
$$
\frac{1-\epsilon}{2\epsilon-\epsilon^2}(2\mu\rho)\tilde{\Phi}(at)+\frac{\tilde{\Phi}(at)}{\tilde{\Phi}(t)}>e^{(1-a^2)t^2/2}a
$$
By substituting $\epsilon=(1-\mu^2)\rho$ and simplification, we get
$$
\frac{\tilde{\Phi}(at)}{a\tilde{\Phi}(t)}\frac{1}{1-\mu^2}\left(\frac{1-\epsilon}{2-\epsilon}2\mu^2+1-\mu^2\right)\geq e^{(1-a^2)t^2/2}
$$
Since $\frac{\tilde{\Phi}(at)}{a\tilde{\Phi}(t)}\geq 1$ when $a\leq 1$ and $e^{(1-a^2)t^2/2}\leq e^{t^2/2}$, it suffices to show
$$
\left(2\mu^2\frac{1-\epsilon}{2-\epsilon}+1-\mu^2\right)\geq e^{t^2/2}(1-\mu^2)
$$
holds when $t\geq t_0$.

By \pref{lem:monotonicity}, we know that RHS is decreasing when $t\geq t_0$. Now we show LHS is increasing when $\mu\geq \mu_0$. It can be shown that the derivative of LHS is
$$
2\mu\rho(1-\mu^2)\mu^2-(2\mu-4\mu^3)(2-\epsilon)\geq -\mu(2-4\mu^2)(2-\epsilon)\geq 0
$$
when $\mu\geq \mu_0$.

Now we only have to verify the inequality when $t=t_0$, and that can be done numerically. The calculation shows that  $\text{LHS}(t_0)\approx 0.8489$ while $\text{RHS}(t_0)\approx 0.836$.

\end{proof}
Finally, we show \pref{lem:rooteps}.
\begin{lemma} (Restatement of \pref{lem:rooteps})
Let $u=\mu_1 I+ w_1$,$v=\mu_2 I + w_2$ be two unit vectors satisfying $\|u-v\|^2/4\leq \eps$, then the probability of them being separated by \pref{alg:rounding} is at most $O(\sqrt{\eps})$.
\end{lemma}
\begin{proof} (Proof of \pref{lem:rooteps})

First we prove the case when $\mu_1=\mu_2=\mu$. Notice that when $\langle w_1, w_2\rangle>0$, the lemma follows from \pref{lem:brvshs} and the fact that Goemans-Williamson algorithm will separate $u$ and $v$ with probability $O(\sqrt{\eps})$\cite{GoemansW95}.

If $\langle w_1,w_2\rangle<0$, then $\|u-v\|^2/4=\|w_1-w_2\|^2/4\geq (\|w_1\|^2+\|w_2\|^2)/4=(1-\mu^2)/2$. Hence $|\mu|\geq 1-O(\sqrt{\eps})$. By union bound, the probability of the algorithm separating $u$ and $v$ is at most $O(\sqrt{\eps})$.

Now we consider the case when $\mu_1\neq \mu_2$, w.l.o.g. we may assume $|\mu_1|>|\mu_2|$.  We construct an auxiliary vector $v'$ as follow: $v'=\mu_1 I+\sqrt{1-\mu_1^2}\bar{w}_2$. It's easy to see that $\|u-v'\|\leq \|u-v\|$.  Let $F$ denote the rounding function, we analyze the probability of $u$ and $v$ being separated as follows:
\begin{align*}
&\ \Pr(F(u)\neq F(v)) \\
&=\Pr(F(u) \neq F(v'), F(v')=F(v))+\Pr(F(u)=F(v'),F(v')\neq F(v)) \\
&\leq \Pr(F(u)\neq F(v'))+\Pr(F(v')\neq F(v))
\end{align*}
Since $\|u-v'\|\leq \|u-v\|$ and $\langle u,I\rangle = \langle v',I\rangle=\mu_1$, by the first part of the proof
$
\Pr(F(u)\neq F(v'))\leq O(\sqrt{\eps})
$.
Also,
$$
\Pr(F(v')\neq F(v))\leq |\mu_1-\mu_2|/2 \leq \|u-v\|/2 \leq
O(\sqrt{\eps}) \mper
$$
Therefore the lemma follows.
\end{proof}

\section{Mutual Information, Statistical Distance and Independence}
\label{app:informationtheory}

Intuitively, when two random variables have low mutual information, they should be close to being independent. In this section we formalize this intuition by giving an explicit bound on the statistical distance between the joint distribution and the independent distribution. We stress that all the results here are sufficient for our use in this work, but we believe the parameters could be further optimized.

We start by defining a few notions that measures the correlation of two random variables.
\begin{definition}
Let $\Omega$ be a finite sample space, $P$ and $Q$ be two probability distributions on $\Omega$. The \emph{square Hellinger distance} of $P$ and $Q$ is defined as
$$
H^2(P,Q)=\frac{1}{2}\sum_{x\in \Omega}(\sqrt{P(x)}-\sqrt{Q(x)})^2
$$
\end{definition}
\begin{definition}
Let $\Omega$ be a finite sample space, $P$ and $Q$ be two probability distributions on $\Omega$. The \emph{Kullback-Leibler divergence} of $P$ and $Q$ is defined as
$$
D_{KL}(P\|Q)=\sum_{x\in\Omega}P(x)\log\frac{P(x)}{Q(x)}
$$
\end{definition}
Now we give a few facts regarding mutual information, Hellinger distance and Kullback-Leibler divergence without proving them.
\begin{fact}
Let $X$ and $Y$ be two jointly distributed random variables taking value in $[q]$, then
$$
I(X;Y)=D_{KL}(p(x,y)\|p(x)\times p(y)).
$$
where $p(x,y)$ is the joint distribution of $X$ and $Y$ on $[q]^2$ and $p(x)\times p(y)$ is the product distribution of the marginal distributions of $X$ and $Y$.
\end{fact}

\begin{fact}
Let $\Omega$ be a finite sample space, $P$ and $Q$ be two probability distribution on $\Omega$, then
$$
D_{KL}(Q\|P)\geq \frac{2}{\ln 2}H^2(P,Q)
$$
\end{fact}
Combining the facts mentioned above, we get the following relation
between mutual information and statistical distance.
\begin{fact} (Restatement of \pref{fact:statdist})
Let $X$ and $Y$ be two jointly distributed random variables on $[q]$
then,
$$ I(X;Y)\geq \frac{1}{2\ln2}\sum_{i,j\in
[q]}(\mathbb{P}(X=i,Y=j)-\mathbb{P}(X=i)\mathbb{P}(Y=j))^2 \mcom $$
in particular for all $i,j\in [q]$
$$
|\mathbb{P}(X=i,Y=j)-\mathbb{P}(X=i)\mathbb{P}(Y=j)|\leq \sqrt{2 I(X;Y)}
$$
As a consequence, if $X$ and $Y$ are two random variables defined on $\{-1,1\}$,
$
\mbox{Cov}(X,Y)\leq O(\sqrt{I(X;Y)})
$
\end{fact}
%
%
\begin{proof}
\begin{align*}
I(X;Y)&=D_{KL}(p(x,y)\|p(x)\times p(y)) \\
      &\geq \frac{2}{\ln 2}H^2(p(x,y),p(x)\times p(y)) \\
      & =\frac{2}{\ln 2}\sum_{i,j\in
      [q]}\left(\sqrt{\mathbb{P}(X=i,Y=j)}-\sqrt{\mathbb{P}(X=i)\mathbb{P}(Y=j)}\right)^2 \\
      & =\frac{2}{\ln 2}\sum_{i,j\in
      [q]}\left(\frac{\mathbb{P}(X=i,Y=j)-\mathbb{P}(X=i)\mathbb{P}(Y=j)}{\sqrt{\mathbb{P}(X=i,Y=j)}+\sqrt{\mathbb{P}(X=i)\mathbb{P}(Y=j)}}\right)^2 \\
      & \geq \frac{1}{2\ln2}\sum_{i,j\in [q]}(\mathbb{P}(X=i,Y=j)-\mathbb{P}(X=i)\mathbb{P}(Y=j))^2
\end{align*}
Upper bounding $ln2$ by 1, finishes the proof.
\end{proof}

\fi

\end{document}